\documentclass[aps,reprint,prx]{revtex4-2}
\usepackage{graphicx}% Include figure files
\usepackage{bm}% bold math
\usepackage{color}
\usepackage{amssymb}
\usepackage{mathtools}
\usepackage[mathlines]{lineno}
\usepackage{amsthm}
\newtheorem{proposition}{Proposition}
\usepackage[T1]{fontenc}
\usepackage{textcomp}
\usepackage{gensymb}
\usepackage{silence}
\usepackage{hyperref}

\begin{document}

\preprint{APS/123-QED}

\title{\textbf{Precursors of extreme events and critical transitions} 
}% 

\author{Riccardo Consonni}
\author{Luca Magri}%
 \email{l.magri@imperial.ac.uk; luca\_magri@polito.it}
\affiliation{%
DIMEAS, Politecnico di Torino, Torino 24 10129, Italy
}%
\affiliation{Department of Aeronautics, Imperial College London, London SW7 2AZ, UK.}

\date{\today}% It is always \today, today,
             %  but any date may be explicitly specified

\begin{abstract}
We propose a theory based on dynamical systems to explain and predict the occurrence of extreme events, of which critical transitions form a subset. In fast–slow nonlinear systems, we identify a cascade of events preceding extreme events:
(i) a {\it slow regime}, in which the fast covariant Lyapunov vectors (CLVs) are both tangent to the fast eigenvectors and remain transversal to the slow subspace;
(ii) a {\it transition regime}, in which the fast eigenvalues become neutrally stable while the fast CLVs are no longer tangent to the fast eigenvectors; and
(iii) a {\it critical regime}, in which a strong spectral gap in the eigenvalues causes both fast and slow CLVs to become tangent along the dominant fast direction, breaking the transversality between fast and slow subspaces.
Building on this cascade, we propose two precursors to forewarn the occurrence of extreme events. We numerically test the theory and precursors on low- and higher-dimensional systems. The proposed precursors predict extreme events and critical transitions with 100\% precision and recall. This work opens  opportunities for time-forecasting extreme events  using theoretically grounded precursors. 
\end{abstract}

%\keywords{Suggested keywords}%Use showkeys class option if keyword
                              %display desired
\maketitle

%\tableofcontents

\section{\label{into}Introduction}

Extreme events \cite{farazmandExtremeEventsMechanisms2019, ghilExtremeEventsDynamics2011} are characterized by large, sudden fluctuations in the state or observables of a dynamical system, which arise in a variety of natural and engineering sectors, such as extreme weather phenomena \cite{ropelewskiGlobalRegionalScale1987, easterlingObservedVariabilityTrends2000}, rogue waves in the ocean and in optical systems \cite{dystheOceanicRogueWaves2008, solliOpticalRogueWaves2007, sapsisStatisticsExtremeEvents2021}, energy dissipation events in turbulent flows \cite{yeungExtremeEventsComputational2015, bloniganAreExtremeDissipation2019}, large-scale power grid failures \cite{crucittiModelCascadingFailures2004, fangSmartGridNew2012}, thermo-acoustic instabilities \cite{huhnStabilitySensitivityOptimisation2020}, climate tipping points and critical transitions \cite{schefferEarlywarningSignalsCritical2009}, among others. In systems with multiple timescales, extreme events are defined as locally unstable regions of the invariant slow manifold, over which the dynamics take place \cite{schefferEarlywarningSignalsCritical2009, schefferAnticipatingCriticalTransitions2012, kuehnMathematicalFrameworkCritical2011, kuehnMultipleTimeScale2015, farazmandExtremeEventsMechanisms2019}. Because of the potentially catastrophic consequences of extreme events, their prediction and mitigation are the subject of both theoretical and data-driven studies \cite[e.g.,][]{dakosSlowingEarlyWarning2008, qiUsingMachineLearning2020, doan2021short, raccaDatadrivenPredictionControl2022a, farandaDynamicalProxiesNorth2017}.

A wide range of methodologies has been developed to diagnose {\it a posteriori} extreme events, including Proper Orthogonal Decomposition (POD) based methods \cite{martinConditionalPODPredicting2025, schmidtConditionalSpaceTime2019}, wavelet decomposition methods \cite{madhusudananPredictingBurstEvents2025}, variational approaches that uncover optimal initial conditions leading to extreme events \cite{farazmandVariationalApproachProbing2017, bloniganAreExtremeDissipation2019}, and machine learning techniques that forewarn the occurrence of extreme events from data \cite[e.g.,][]{doan2021short, raccaDatadrivenPredictionControl2022a, pickeringDiscoveringForecastingExtreme2022, jiangPredictingExtremeEvents2022, foxPredictingExtremeEvents2023}.
Purely empirical observations suggest that extreme events can be forewarned by tracking covariant Lyapunov vectors (CLVs) \cite{beimsAlignmentLyapunovVectors2016, sharafiCriticalTransitionsPerturbation2017}. CLVs \cite{ruelleErgodicTheoryDifferentiable1979, ginelliCovariantLyapunovVectors2013} define the directions of infinitesimal perturbation growth and decay along a trajectory, providing a geometric characterization of the tangent space. Because CLVs evolve covariantly with the dynamics, they capture how infinitesimal perturbations continuously stretch and rotate in time \cite{kuptsovTheoryComputationCovariant2012}. The corresponding Lyapunov exponents quantify the average exponential rates of expansion (or contraction), forming the Lyapunov spectrum that measures the system's degree of chaos and predictability \cite{pikovskyLyapunovExponentsTool2016}. The Lyapunov spectrum and CLVs offer a description of the linear properties of an attractor for the analysis of the behaviour of chaotic systems and their inertial manifolds \cite{yangHyperbolicityEffectiveDimension2009a, yangCovariantLyapunovVectors2012, dingEstimatingDimensionInertial2016}. The angles between CLVs provide a quantitative measure of the system's hyperbolicity and allow the identification of regions where stable and unstable manifolds become nearly tangent, indicating a potential loss of structural stability \cite{kuptsovLyapunovAnalysisStrange2018}. The computation of CLVs has become feasible only in recent years, thanks to the development of efficient numerical algorithms \cite{ginelliCharacterizingDynamicsCovariant2007, wolfeEfficientMethodRecovering2007, froylandComputingCovariantLyapunov2013}. CLVs have also been used to analyse the stability of chaotic systems arising in engineering applications, such as nonlinear oscillations in thermo-acoustics \cite{huhnStabilitySensitivityOptimisation2020, moreira_huhn_2021}, stability of dynamical models learned from data \cite{margazoglouStabilityAnalysisChaotic2023, ozalpReconstructionForecastingStability2023, ozalpStabilityAnalysisChaotic2025, ahmedRobustQuantumReservoir2025}, and systems characterized by multiple timescales, where distinct slow and fast modes interact \cite{eckmannHydrodynamicLyapunovModes2000, carluLyapunovAnalysisMultiscale2019}.
Beyond model-based analyses, several studies have attempted to reconstruct or approximate CLVs directly from data \cite{martinEstimatingCovariantLyapunov2022, brugnagoMachineLearningAlignment2020, viennetGuidelinesDatadrivenApproaches2022}, enabling the application of these geometric indicators to experimental or observational systems. In parallel, other approaches have focused on broader tangent-space diagnostics, for example, optimally time-dependent (OTD) modes and backward Lyapunov vectors \cite{babaeeMinimizationPrincipleDescription2016, farazmandDynamicalIndicatorsPrediction2016, blanchardAnalyticalDescriptionOptimally2019}, which similarly capture directions of transient instability and have shown promise in anticipating the onset of rare or extreme events.

Although tangencies between CLVs have been empirically observed as potential precursors to extreme events \cite{beimsAlignmentLyapunovVectors2016, sharafiCriticalTransitionsPerturbation2017}, a theoretical explanation of the route to extreme events and critical transitions is missing. In particular, existing results on dominated Oseledets splitting describe the regimes in which tangencies do not occur \cite{pughStableErgodicity2004, bochiLyapunovExponentsGeneric2005}, but they do not explain the dynamical processes leading to CLV tangencies. The connection between such mechanisms, extreme events, and invariant manifolds is still not well understood \cite{yangHyperbolicityEffectiveDimension2009a, yangCovariantLyapunovVectors2012, dingEstimatingDimensionInertial2016, carluLyapunovAnalysisMultiscale2019}.

The overarching goal of this paper is to develop a theory that provides a mechanistic explanation of precursors of extreme events in fast-slow systems. In Section \ref{section:setup} we introduce some background from dynamical systems theory. In Section \ref{section:clv_equation}, we introduce the CLV equation and give a geometric interpretation. In Section \ref{sec:theory_roadmap} we provide a roadmap of the theory. In Section \ref{section: stability of clvs}, we calculate the invariant sets of the CLV equation and their stability. In Section \ref{section:fast-slow systems}, we describe the tangent space dynamics of fast-slow systems. In Section \ref{sec: adiabatic condition and dynamical regimes}, we explain the universal route and cascade of mechanisms that occur before extreme events and critical transitions. In Section \ref{section:numerical results}, we test our theory with numerical simulations of a bistable Rössler system, FitzHugh-Nagumo coupled cells, and a variant of the Lorenz-96 model. Section \ref{section:conclusion} concludes the work.

\section{\label{section:setup}Background}
We consider a nonlinear dynamical system

\begin{equation} \label{dynamical_system}
\dot{\mathbf{z}} = \mathbf{F}(\mathbf{z}),
\end{equation}

where $\mathbf{z} \in \mathbb{R}^d$ is the $d$-dimensional state vector defined in the time domain $t \in \mathbb{R}^+$, and $\mathbf{F}: \mathbb{R}^d \rightarrow \mathbb{R}^d$ is a $C^r$, $r \geq 1$ smooth vector field. Given an initial condition $\mathbf{z}(t_0) = \mathbf{z}_0$, the dynamics of \eqref{dynamical_system} are governed by the flow map $\mathbf{z}(t; t_0, \mathbf{z}_0) = \mathbf{\Phi}_{t_0}^{t}(\mathbf{z}_0)$ \cite{arnoldOrdinaryDifferentialEquations1992}. On the other hand, the dynamics of infinitesimal perturbations, $\mathbf{w}(t)$, applied to a trajectory $\bar{\mathbf{z}}(t)$, i.e., $\mathbf{z}(t) = \bar{\mathbf{z}}(t) + \mathbf{w}(t)$, are governed by the Jacobian equation

\begin{equation} \label{eq_2}
\dot{\mathbf{w}} = \mathbf{J}(t) \mathbf{w},
\end{equation}

where $\mathbf{J}(t) = \nabla_{\bar{\mathbf{z}}}\mathbf{F}(\mathbf{z}(t; t_0, \mathbf{z}_0)) \in \mathbb{R}^{d \times d}$ is the Jacobian, which is assumed to be diagonalizable. Given an initial condition $\mathbf{w}(t_0) = \mathbf{w}_0$, the solution of Equation \eqref{eq_2} is $\mathbf{w}(t; t_0, \mathbf{w}_0) = \nabla_\mathbf{z}\mathbf{\Phi}_{t_0}^t(\mathbf{z}_0) \mathbf{w}_0$, where $\nabla_{\mathbf{z}}\mathbf{\Phi}_{t_0}^{t_0}(\mathbf{z}_0) = \mathbf{I}$ and

\begin{align}
\nabla_{\mathbf{z}}\mathbf{\Phi}_{t_0}^{t}(\mathbf{z}_0) \nonumber 
&= \mathcal{T}\exp\left( \int_{t_0}^t \mathbf{J}(\tau) , d\tau \right) \nonumber \\ & = \lim_{N \to \infty}e^{\mathbf{J}(t_N)\Delta t} e^{\mathbf{J}(t_{N-1})\Delta t} \cdots e^{\mathbf{J}(t_0)\Delta t},
\label{eq:timeorderedexponential}
\end{align}

is the tangent propagator, in which the time instants $t_i$ are defined as $t_i = i \Delta t$ for $i = 0,...,N$, and $\Delta t = t / N$, and $\mathcal{T}$ is the time-ordering operator.

\subsection{\label{sec:lyapunov_exponents}Lyapunov Exponents}
Given a vector $\mathbf{w}_0$ and an initial condition $\mathbf{z}_0$, the Lyapunov exponent (LE) is \cite{pikovskyLyapunovExponentsTool2016}

\begin{equation} \label{eq_4}
\bar\lambda(\mathbf{z}_0,\mathbf{w}_0) = \lim_{t \to \infty} \frac{1}{t} \ln\left(\frac{||\nabla_{\mathbf{z}}\mathbf{\Phi}_{t_0}^{t}(\mathbf{z}_0)\mathbf{w}_0||}{||\mathbf{w}_0||}\right).
\end{equation}

The LEs are the asymptotic average exponential rates of expansion of the vector $\mathbf{w}_0$ under the dynamics starting at $\mathbf{z}_0$.

As shown by Oseledets \cite{oseledets1968multiplicative}, there exist $d$ Lyapunov exponents (LEs)
$ \bar \lambda_1(\mathbf{z}_0) \ge \bar \lambda_2(\mathbf{z}_0) \ge \cdots \ge \bar \lambda_d(\mathbf{z}_0),
$ which collectively form the Lyapunov spectrum.
In ergodic systems, the Lyapunov exponents are measures of the attractor, which means that they depend on neither $\mathbf{z}_0$ nor $\mathbf{w}_0$.

The instantaneous exponential rate can be calculated with the instantaneous Lyapunov exponents (ILE) \cite{kuptsovLyapunovAnalysisStrange2018}

\begin{align} \label{eq_6}
\lambda(\mathbf{z},\mathbf{w}) = \frac{{\mathbf{w}^T\mathbf{J}(\mathbf{z})\mathbf{w}}}{{\mathbf{w}^T \mathbf{w}}}.
\end{align}

The Lyapunov exponents are the asymptotic average of the instantaneous Lyapunov exponents. When the vector $\mathbf{w}$ used to compute the ILE in Equation \eqref{eq_6} is the $i$th covariant Lyapunov vector, $\mathbf{u}_i$ (see Section \ref{sec:covariant_lyapunov_vectors}), the corresponding ILE is referred to as the $i$th instantaneous covariant Lyapunov exponent (ICLE).

\subsection{Covariant Lyapunov Vectors}\label{sec:covariant_lyapunov_vectors}
In ergodic systems, the matrices

\begin{equation}\label{eq_7}
\lim_{t \to \pm \infty} \left[\left(\nabla_{\mathbf{z}}\mathbf{\Phi}_{t_0}^{t}\right)^T \nabla_{\mathbf{z}}\mathbf{\Phi}_{t_0}^{t} \right]^{\frac{1}{2t}} = \mathbf{\Xi}^\pm
\end{equation}

exist and are referred to as the forward and backward Oseledets matrices, respectively. The Oseledets matrices are $d$-dimensional square matrices with positive eigenvalues $\bar\gamma_1 \geq ... \geq \bar\gamma_d$ such that $\bar\lambda_i = \ln\bar\gamma_i$ \cite{ginelliCovariantLyapunovVectors2013, kuptsovFastNumericalTest2012}.
The eigenvectors of the forward (backward) matrix are the forward (backward) Lyapunov vectors, which are orthonormal bases of the tangent space.
The intersection of the backward and forward Lyapunov bases \cite{ginelliCovariantLyapunovVectors2013} yields the Oseledets splitting,
$\mathbb{R}^d = E_1 \oplus E_2 \oplus \cdots \oplus E_d$, in which a vector that belongs to $E_i$ grows asymptotically with the corresponding Lyapunov exponent $\bar\lambda_i$

\begin{equation} \label{eq_8}
\bar\lambda_i = \lim_{t \to \pm \infty} \frac{1}{t} \ln \left(\frac{||\mathbf{e}(t)||}{||\mathbf{e}(0)||} \right) \quad \textrm{if} \quad \mathbf{e}(0) \in E_i.
\end{equation}

The Oseledets splitting is covariant with the dynamics, i.e.,
$\mathbf{e}(t_0)$ $\in$ $E_i(\mathbf{z}(t_0))$, $\mathbf{e}(t)$ $=$ $\nabla_{\mathbf{z}}\mathbf{\Phi}_{t_0}^{t}\mathbf{e}(t_0)$ $\Rightarrow \mathbf{e}(t)$ $\in$ $E_i(\mathbf{z}(t)).$
A vector that belongs to the Oseledets subspace $E_i$ is referred to as the $i$th covariant Lyapunov vector (CLV). To numerically compute the CLVs, we employ the algorithm of \cite{ginelliCharacterizingDynamicsCovariant2007, ginelliCovariantLyapunovVectors2013}.

\section{Covariant Lyapunov Vectors: Equation and Geometric Interpretation} \label{section:clv_equation}

We derive an equation for the time evolution of CLVs, which is key to identifying the conditions under which CLV tangencies occur. In Section~\ref{sec: adiabatic condition and dynamical regimes}, we link these conditions to the occurrence of critical transitions and extreme events. A generic CLV with unit length, $\mathbf{u}(t)$\footnote{In this section, we drop the subscript $i$ for brevity.}, is computed by evolving and normalizing a CLV with arbitrary norm, $\hat{\mathbf{u}}(t)$, as \cite{blanchardAnalyticalDescriptionOptimally2019}

\begin{equation} \label{eq_23}
{\mathbf{u}}(t+\delta t) = \frac{\hat{\mathbf{u}}(t+\delta t)}{||\hat{\mathbf{u}}(t + \delta t)||} \quad \forall \delta t.
\end{equation}

Taylor expanding \eqref{eq_23} at first order yields

\begin{align} \label{eq_clv projeciton form}
\dot{{\mathbf{u}}} &= \mathbf{J}(t){\mathbf{u}}(t) - \left({{\mathbf{u}}(t)^T \mathbf{J}(t){\mathbf{u}}(t)}\right){\mathbf{u}}(t), \nonumber \\ 
& = \left( \mathbf{I} - \mathbf{u}(t)\mathbf{u}(t)^T \right)\mathbf{J}(t)\mathbf{u} \nonumber \\ 
& = \mathbb{P}(\mathbf{u}(t))\mathbf{J}(t)\mathbf{u}(t) \nonumber \\ 
&=\mathbf{f}_\mathbf{J}(\mathbf{u}(t)),
\end{align}

where
$ \mathbf{u}(t)^T \mathbf{J}(\mathbf{z}(t)), \mathbf{u}(t)
= \lambda(\mathbf{u}(t), \mathbf{z}(t)),
$ is the ICLE associated with the CLV $\mathbf{u}$ (Eq.~\ref{eq_6}), and
$\mathbb{P}({\mathbf{u}}) = \mathbf{I} - {\mathbf{u}(t)}{\mathbf{u}(t)}^T$ is an operator that projects any vector onto a direction orthogonal to ${\mathbf{u}(t)}$. Physically, $\dot{{\mathbf{u}}}$ is orthogonal to the state vector ${\mathbf{u}(t)}$, i.e., the solution of \eqref{eq_clv projeciton form} has constant norm, i.e., ${d}||{\mathbf{u}(t)}||/{dt} = 0$.
This means that the dynamics of CLVs lie on the surface of a unit sphere, $S^{d-1}$.

\section{Roadmap of the Theory}\label{sec:theory_roadmap}
A roadmap of the theoretical approach is shown in Figure \ref{fig:roadmap}.
By taking advantage of the geometric interpretation of the CLV equation \eqref{eq_clv projeciton form}, which defines a nonlinear system on the unit sphere (Section~\ref{section:clv_equation}), we identify and characterize its invariant sets. First, we assume Jacobians with stationary eigenbases, in which the fixed points of the CLV dynamics can be exactly linked to the eigenvectors of the Jacobian (Section~\ref{section: stability of clvs}). Second, we consider Jacobians with non-stationary eigenbases in fast-slow systems (Section \ref{section:fast-slow systems}), i.e., systems that have separate timescales. We argue that slowly varying eigendirections can be modelled as fixed points when the CLVs decay fast, which is referred to as the {\it adiabatic condition} (Section~\ref{sec:adiabatic_approx}). Finally, we explain the universal route to extreme events and critical transitions, which is revealed by the geometric structures of CLVs and eigenvectors, and their growth rates (Section~\ref{dynamical_regimes}).

\begin{figure}
\centering
\includegraphics[width=0.9\linewidth]{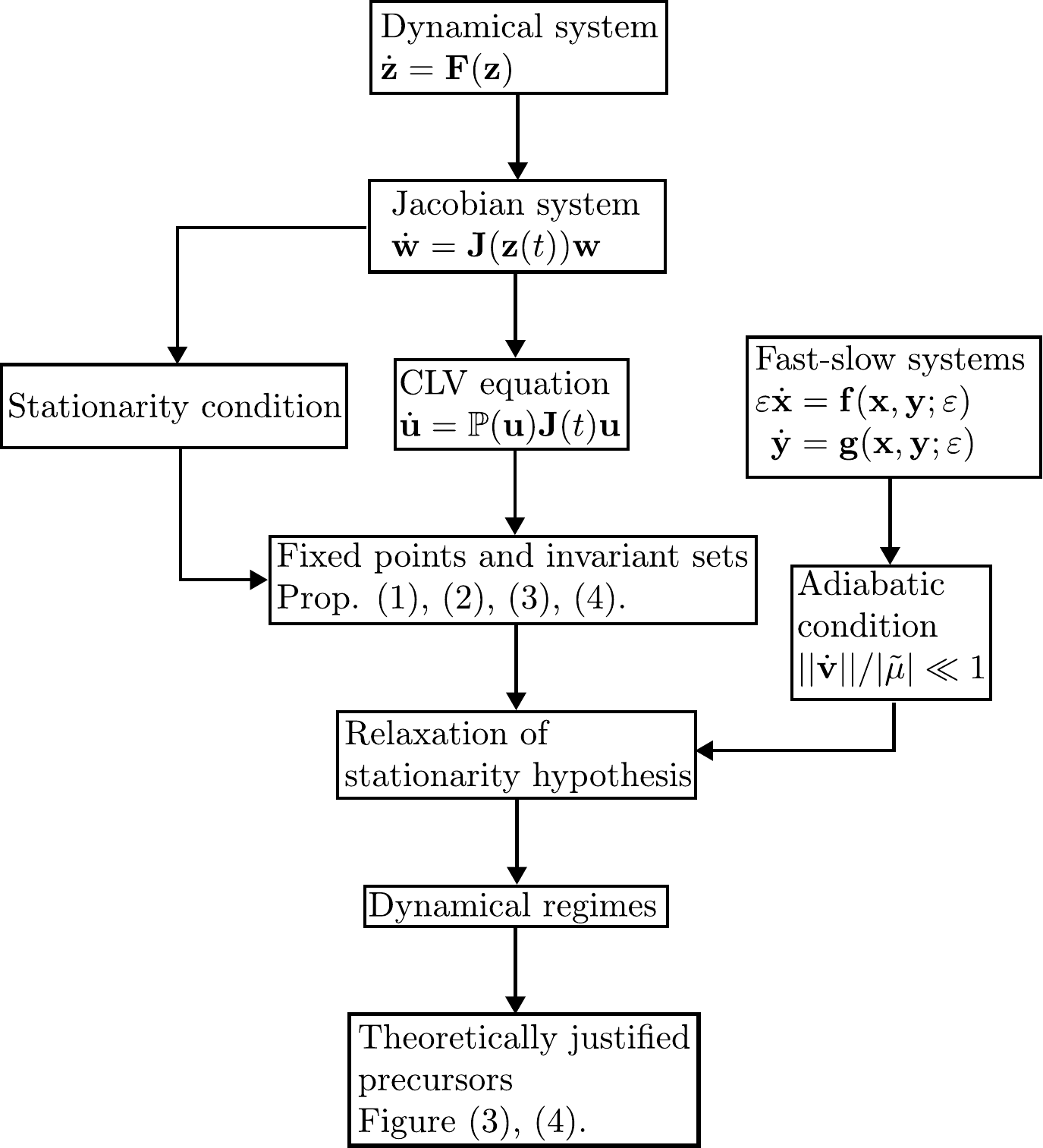}
\caption{Roadmap of the theoretical steps of Section \ref{section:clv_equation}, Section \ref{section:fast-slow systems}, and Section \ref{sec: adiabatic condition and dynamical regimes}.}
\label{fig:roadmap}
\end{figure}

\section{Properties and Invariant Sets of the CLV Dynamics}
\label{section: stability of clvs}

\subsection{Invariant Sets} \label{sec: fixed point and limit cycles}

We analyse the invariant sets of \eqref{eq_clv projeciton form} and their stability to gain insight into the behaviour of CLVs.
First, we analyse fixed points. Because Equation~\eqref{dynamical_system} contains only real quantities, the fixed points must also be real.
(For brevity, unless it is necessary for clarity, we omit the explicit dependence on time, i.e., $\mathbf{u}(t) = \mathbf{u}$.)

\begin{proposition} \label{prop: eig_fixed_points}
The points $\mathbf{v}$ and $-\mathbf{v}$ are fixed points of \eqref{eq_clv projeciton form} if and only if $\mathbf{v}$ is a real stationary eigenvector of $\mathbf{J}$ with unit magnitude.
\end{proposition}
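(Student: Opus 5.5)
The plan is to characterise fixed points of \eqref{eq_clv projeciton form} directly. A point $\mathbf{v}$ on the unit sphere $S^{d-1}$ is a fixed point exactly when $\mathbf{f}_\mathbf{J}(\mathbf{v}) = \mathbb{P}(\mathbf{v})\mathbf{J}\mathbf{v} = \mathbf{0}$ for all times $t$. Two points need to be settled first. Since the solution of \eqref{eq_clv projeciton form} has constant norm, only unit vectors are admissible states. Because $\mathbf{J}(t)$ may depend on time, a fixed point must satisfy this identity for every $t$, with $\mathbf{v}$ independent of $t$; this is what the word ``stationary'' in the statement encodes. Reality of $\mathbf{v}$ follows from the remark preceding the proposition: the system is real, so its phase space is $S^{d-1}\subset\mathbb{R}^d$.

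For the ``if'' direction, I will take a real unit vector $\mathbf{v}$ with $\mathbf{J}(t)\mathbf{v} = \mu(t)\mathbf{v}$ for all $t$, where $\mu(t)$ is real because $\mathbf{J}$ and $\mathbf{v}$ are real. Substituting gives $\mathbf{J}\mathbf{v} - (\mathbf{v}^T\mathbf{J}\mathbf{v})\mathbf{v} = \mu\mathbf{v} - \mu(\mathbf{v}^T\mathbf{v})\mathbf{v} = \mathbf{0}$, using $\|\mathbf{v}\|=1$. The same computation applies to $-\mathbf{v}$, because the map $\mathbf{f}_\mathbf{J}$ is odd: $\mathbf{f}_\mathbf{J}(-\mathbf{u}) = -\mathbf{f}_\mathbf{J}(\mathbf{u})$, since the projector $\mathbb{P}$ is quadratic in $\mathbf{u}$. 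Equivalently, $-\mathbf{v}$ is itself a unit eigenvector with the same eigenvalue.

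For the ``only if'' direction, I will assume $\mathbb{P}(\mathbf{v})\mathbf{J}(t)\mathbf{v} = \mathbf{0}$ for all $t$ with $\|\mathbf{v}\|=1$. Then $\mathbf{J}(t)\mathbf{v} = (\mathbf{v}^T\mathbf{J}(t)\mathbf{v})\,\mathbf{v} = \lambda(\mathbf{v},\mathbf{z}(t))\,\mathbf{v}$. This says that $\mathbf{J}\mathbf{v}$ is parallel to $\mathbf{v}$, since $\ker\mathbb{P}(\mathbf{v}) = \mathrm{span}(\mathbf{v})$. Hence $\mathbf{v}$ is a real eigenvector with eigenvalue equal to the instantaneous Lyapunov exponent \eqref{eq_6}, and it is stationary because $\mathbf{v}$ does not vary in time. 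The argument for $-\mathbf{v}$ follows by oddness.

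The main obstacle is a matter of interpretation rather than calculation: making the meaning of ``stationary'' precise. For a frozen $\mathbf{J}$ the statement is immediate. For time-dependent Jacobians I will define a fixed point as a constant solution, note that it forces the eigen-relation at every instant, and point out that the eigenvalue $\mu(t)$ may still vary in time. I will also observe that complex eigenvectors yield no fixed points, because $\mathbf{f}_\mathbf{J}$ maps real vectors to real vectors and the phase space is $S^{d-1}\subset\mathbb{R}^d$. Complex-conjugate eigenpairs instead give invariant sets inside the real 2-plane they span, which foreshadows the limit-cycle analysis.
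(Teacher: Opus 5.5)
Your proposal is correct and follows the same route as the paper. Imposing $\mathbf{f}_\mathbf{J}(\mathbf{v})=\mathbf{0}$ gives $\mathbf{J}\mathbf{v}=(\mathbf{v}^T\mathbf{J}\mathbf{v})\mathbf{v}$, so $\mathbf{v}$ is an eigenvector, and it is stationary because $\dot{\mathbf{v}}=\mathbf{0}$; you also write out the converse that the paper leaves to the reader and handle $-\mathbf{v}$ through the oddness of $\mathbf{f}_\mathbf{J}$, both soundly.
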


\begin{proof}
($\Rightarrow$) Assume $\mathbf{v}$ is a fixed point of \eqref{eq_clv projeciton form}. Then $\mathbf{0} = \dot{\mathbf{v}} = \mathbf{Jv} - (\mathbf{v}^T\mathbf{Jv})\mathbf{v}$. Let $\mathbf{v}^T\mathbf{Jv} = \mu$ be a scalar value; then the expression becomes $\mathbf{Jv} = \mu \mathbf{v}$. Thus, $\mathbf{v}$ is an eigenvector of $\mathbf{J}$. Since $\mathbf{v}$ is a fixed point, we also have that $\dot{\mathbf{v}} = \mathbf{0}$, and therefore it is a stationary eigenvector of $\mathbf{J}$. The same argument can be used for $-\mathbf{v}$.
($\Leftarrow$) The proof is straightforward and left to the reader.
\end{proof}

If $\mathbf{J}$ has $k$ stationary eigenvectors with geometric multiplicity one, $g = 1$, then Equation \eqref{eq_clv projeciton form} has $2k$ fixed points, that is, two for each real stationary eigenvector, i.e., $\mathbf{v}_i$ and $-\mathbf{v}_i$ with $i=1,2,\ldots,k$ (left panel of Figure \ref{fig:CLV_sphere}).

\begin{proposition}
\label{proposition 2}
Consider a Jacobian $\mathbf{J}$ with an eigenvalue $\mu \in \mathbb{R}$ with geometric multiplicity $g(\mu)>1$. If the eigenspace $E$ associated with $\mu$ is stationary, the set $M = { \mathbf{u} \in E: ||\mathbf{u}|| = 1 } = E \bigcap S^{d-1}$ is a set of fixed points of Equation \eqref{eq_clv projeciton form}.
\end{proposition}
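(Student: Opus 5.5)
The plan is a direct substitution into the right-hand side of \eqref{eq_clv projeciton form}, in the spirit of the ($\Leftarrow$) direction of Proposition~\ref{prop: eig_fixed_points}. Fix any $\mathbf{u}\in M$, so that $\mathbf{u}\in E$ and $\mathbf{u}^T\mathbf{u}=1$. Since every nonzero vector of the eigenspace $E$ is an eigenvector for $\mu$, we have $\mathbf{J}\mathbf{u}=\mu\mathbf{u}$. Hence the instantaneous exponent \eqref{eq_6} evaluated at $\mathbf{u}$ is $\mathbf{u}^T\mathbf{J}\mathbf{u}=\mu\,\mathbf{u}^T\mathbf{u}=\mu$. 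Substituting,
\begin{equation*}
\mathbf{f}_\mathbf{J}(\mathbf{u}) = \left(\mathbf{I}-\mathbf{u}\mathbf{u}^T\right)\mathbf{J}\mathbf{u} = \mu\mathbf{u}-\mu\,\mathbf{u}(\mathbf{u}^T\mathbf{u}) = \mathbf{0}.
\end{equation*}
Equivalently, $\mathbf{J}\mathbf{u}$ is parallel to $\mathbf{u}$, so it lies in the kernel of the projector $\mathbb{P}(\mathbf{u})$.

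The second step is to use stationarity of $E$, so that this algebraic statement yields a genuine fixed point of the non-autonomous equation. Because $\mathbf{J}=\mathbf{J}(t)$ varies along the trajectory, the identity $\mathbf{f}_{\mathbf{J}(t)}(\mathbf{u})=\mathbf{0}$ must hold for every $t$ at the same point $\mathbf{u}$. By hypothesis, $E$ is a fixed subspace which, for all $t$, is the eigenspace of $\mathbf{J}(t)$ for the real eigenvalue $\mu$. I will also allow $\mu=\mu(t)$ to vary in time, since the cancellation above does not depend on its value. Therefore, for each $t$, $\mathbf{J}(t)\mathbf{u}=\mu(t)\mathbf{u}$ for the same fixed $\mathbf{u}$, and the computation above gives $\dot{\mathbf{u}}=\mathbf{0}$ for all $t$. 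Every point of $M$ is thus a constant solution, so $M$ consists of fixed points. Because $g(\mu)>1$, $M=E\cap S^{d-1}$ is a sphere $S^{g-1}$ of dimension at least one, a continuum, rather than the isolated pair $\pm\mathbf{v}$ of Proposition~\ref{prop: eig_fixed_points}. This last remark is worth stating but is not needed for the proof.

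There is essentially no obstacle. The only subtle point is the meaning of ``stationary'': it must mean that the subspace $E$ itself, and not merely its dimension, is invariant in time while remaining an eigenspace of $\mathbf{J}(t)$. Under the weaker reading, where only the eigenvalue is fixed and $E(t)$ rotates, the claim fails, so I would make this interpretation explicit at the start of the proof.
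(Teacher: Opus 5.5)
Your proof is correct and follows the same route as the paper. The paper observes that every vector of $E$ is an eigenvector for $\mu$ and then invokes the ($\Leftarrow$) direction of Proposition~\ref{prop: eig_fixed_points}, which it leaves to the reader; you instead carry out that step explicitly via $\mathbf{f}_\mathbf{J}(\mathbf{u})=\mu\mathbf{u}-\mu\,\mathbf{u}(\mathbf{u}^T\mathbf{u})=\mathbf{0}$, and you also make explicit the reading of ``stationary'' (the subspace $E$ itself fixed in time) that the paper leaves implicit.
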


\begin{proof}
The eigenspace $E$ is spanned by the $g$ eigenvectors associated with $\mu$, $E = \operatorname{span} \{ \mathbf{v}_0, \dots, \mathbf{v}_{g-1} \}$.
Therefore, any linear combination of eigenvectors associated with $\mu$ is an eigenvector of $\mathbf{J}$, and therefore all elements of $E$ are eigenvectors.
From Proposition \ref{prop: eig_fixed_points}, all points in $M = E \bigcap S^{d-1}$ are fixed points of Eq. \eqref{eq_clv projeciton form}. Geometrically, the set $M = E \bigcap S^{d-1}$ is an equator of the sphere $S^{d-1}$ (middle panel of Figure \ref{fig:CLV_sphere}).
\end{proof}

We now analyse complex conjugate pairs of eigenvectors.

\begin{proposition}
\label{proposition 3}
Let $\mathbf{v}$ and ${\mathbf{v}^*}$ be two complex conjugate eigenvectors spanning a real stationary eigenspace $E$. Then the set $M_E = E \bigcap S^{d-1}$ is invariant under the flow associated with $\mathbf{f}_{\mathbf{J}}$ and generates a rotation.
\end{proposition}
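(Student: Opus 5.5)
The plan is to write the complex eigenvector as $\mathbf{v} = \mathbf{a} + i\mathbf{b}$ with $\mathbf{a},\mathbf{b}\in\mathbb{R}^d$ and eigenvalue $\mu = \alpha + i\beta$, $\beta\neq 0$. Then $E = \operatorname{span}\{\mathbf{a},\mathbf{b}\}$ is a real two-dimensional subspace. Separating real and imaginary parts of $\mathbf{J}\mathbf{v} = \mu\mathbf{v}$ gives
\begin{equation*}
\mathbf{J}\mathbf{a} = \alpha\mathbf{a} - \beta\mathbf{b}, \qquad \mathbf{J}\mathbf{b} = \beta\mathbf{a} + \alpha\mathbf{b},
\end{equation*}
so $E$ is $\mathbf{J}$-invariant. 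Stationarity of $E$ means that $\mathbf{a},\mathbf{b}$ (hence $\alpha,\beta$) can be taken constant in time, so the argument reduces to an autonomous linear map restricted to a fixed plane.

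\textbf{Invariance.} For $\mathbf{u}\in M_E$ we have $\mathbf{J}\mathbf{u}\in E$ and $\mathbf{u}\in E$. Hence $\mathbf{f}_\mathbf{J}(\mathbf{u}) = \mathbf{J}\mathbf{u} - (\mathbf{u}^T\mathbf{J}\mathbf{u})\mathbf{u}\in E$. Since the CLV equation \eqref{eq_clv projeciton form} preserves the norm, $\mathbf{f}_\mathbf{J}(\mathbf{u})$ is also tangent to $S^{d-1}$. So the vector field is tangent to the circle $M_E = E\cap S^{d-1}$. By uniqueness of solutions (the field is smooth), trajectories starting on $M_E$ stay on $M_E$.

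\textbf{Rotation.} Choose an orthonormal basis $\{\mathbf{e}_1,\mathbf{e}_2\}$ of $E$ and parametrize $M_E$ as $\mathbf{u}(\theta) = \cos\theta\,\mathbf{e}_1 + \sin\theta\,\mathbf{e}_2$. Let $\mathbf{A}$ be the $2\times 2$ matrix of $\mathbf{J}|_E$ in this basis; its eigenvalues are $\alpha\pm i\beta$. Projecting the dynamics onto the tangent $\mathbf{u}_\perp = (-\sin\theta,\cos\theta)$ gives
\begin{equation*}
\dot\theta = \mathbf{u}_\perp^T\mathbf{A}\mathbf{u}.
\end{equation*}
A fixed point on $M_E$ would be a real eigenvector of $\mathbf{A}$, by Proposition \ref{prop: eig_fixed_points} restricted to $E$. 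Since $\beta\neq0$, no such eigenvector exists. Thus $\dot\theta$ is a continuous $2\pi$-periodic function of $\theta$ with no zeros, so it has constant sign. The flow is therefore a monotone rotation around the circle, i.e.\ a periodic orbit (limit cycle). Its period is $\oint d\theta/\dot\theta$, which equals $2\pi/|\beta|$ when $\mathbf{J}|_E$ is normal.

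\textbf{Main obstacle.} The delicate step is that $\mathbf{J}|_E$ need not be normal: $\mathbf{a}$ and $\mathbf{b}$ are generally neither orthogonal nor of equal length. The rotation is then non-uniform, and the ICLE $\mathbf{u}^T\mathbf{J}\mathbf{u}$ varies along the cycle. I would handle this with the sign argument above, which only needs $\dot\theta\neq0$, rather than computing $\dot\theta$ explicitly. As a cross-check, I would note that $\dot\theta = \beta$ exactly when $\mathbf{a}\perp\mathbf{b}$ and $\|\mathbf{a}\|=\|\mathbf{b}\|$.
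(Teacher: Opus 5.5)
Your invariance step is the same as the paper's. The paper also checks that $\mathbf{f}_{\mathbf{J}}$ maps $E$ into itself, by an explicit computation in the real basis $\mathbf{q}=2\mathrm{Re}(\mathbf{v})$, $\mathbf{p}=-2\mathrm{Im}(\mathbf{v})$, which is equivalent to your relations for $\mathbf{J}\mathbf{a}$ and $\mathbf{J}\mathbf{b}$. It then intersects $E$ with the invariant sphere. Your ``tangent to $M_E$ plus uniqueness'' argument is the rigorous form of ``the intersection of invariant sets is invariant.''

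Where you genuinely go beyond the paper is the rotation claim. The paper only asserts that on the circle $M_E$ ``the only admissible dynamics are rotations.'' You prove it:
\begin{itemize}
\item you write $\dot\theta=\mathbf{u}_\perp^T\mathbf{A}\mathbf{u}$;
\item a zero of $\dot\theta$ would be a real eigenvector of $\mathbf{J}$ inside $E$, which is impossible for $\beta\neq 0$;
\item by continuity on the compact circle, $\dot\theta$ has a fixed sign, so the motion is a monotone circulation with no equilibria.
\end{itemize}
This closes the one unjustified step in the paper. You also correctly note that non-normality of $\mathbf{J}|_E$ only makes the angular speed non-uniform.

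Three minor points.

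First, stationarity of $E$, or even of $\mathbf{v}$, does not give constant $\alpha,\beta$. For example, $\mathbf{J}(t)=\begin{bmatrix}\alpha(t)&\beta(t)\\-\beta(t)&\alpha(t)\end{bmatrix}$ has the fixed eigenvector $(1,i)^T$ with eigenvalue $\alpha(t)+i\beta(t)$. Your argument survives without constancy. Tangency of $\mathbf{f}_{\mathbf{J}}$ to $M_E$ and $\dot\theta\neq 0$ hold instant by instant. By connectedness of $S^1\times I$, where $I$ is a time interval on which the eigenvalues remain non-real, the sign of $\dot\theta$ cannot switch there. Only the ``periodic orbit of fixed period'' statement needs $\mathbf{J}|_E$ frozen.

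Second, the period is $2\pi/|\beta|$ whether or not $\mathbf{J}|_E$ is normal. The point on $M_E$ is the direction of $e^{\mathbf{A}t}\hat{\mathbf{u}}_0=e^{\alpha t}\mathbf{P}\mathbf{R}(\beta t)\mathbf{P}^{-1}\hat{\mathbf{u}}_0$, with $\mathbf{R}$ a planar rotation. This returns to $\hat{\mathbf{u}}_0$ exactly when $\beta t\in 2\pi\mathbb{Z}$. For instance, $\mathbf{A}=\begin{bmatrix}0&a\\-c&0\end{bmatrix}$ with $a,c>0$ gives $\dot\theta=-(a\sin^2\theta+c\cos^2\theta)$ and $\int_0^{2\pi}d\theta/(a\sin^2\theta+c\cos^2\theta)=2\pi/\sqrt{ac}=2\pi/|\beta|$. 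Non-normality changes the speed profile, not the period.

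Third, with $\mathbf{e}_1=\mathbf{a}/\|\mathbf{a}\|$ and $\mathbf{e}_2=\mathbf{b}/\|\mathbf{b}\|$, your cross-check actually gives $\dot\theta=-\beta$; the sign depends on the orientation of the basis.
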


\begin{proof}
Let $\sigma = \mu + i\omega$ and ${\sigma}^* = \mu - i\omega$ be the eigenvalues associated with $\mathbf{v}$ and ${\mathbf{v}^*}$, respectively. A real eigenspace $E$ is spanned by a linear combination $a\mathbf{q}+b\mathbf{p}$, where $\mathbf{q} = \mathbf{v} + \mathbf{v}^* = 2\mathrm{Re}(\mathbf{v})$ and $\mathbf{p} = i(\mathbf{v} - {\mathbf{v}}^*)=-2\mathrm{Im}(\mathbf{v})$, and $a,b$ are arbitrary real numbers.
From Equation \eqref{eq_clv projeciton form}

\begin{align}
&\mathbf{f_J}(a\mathbf{q} + b\mathbf{p}) = \nonumber \\ 
&=\mathbf{J} (a\mathbf{q} + b\mathbf{p}) +(a\mathbf{q}+ b\mathbf{p})^T\mathbf{J}(a\mathbf{q} + b\mathbf{p}) (a\mathbf{q} + b\mathbf{p})\nonumber \\ 
&= a(\mu \mathbf{q} + \omega\mathbf{p}) + b(\mu \mathbf{p} - \omega\mathbf{q}) + c(a\mathbf{q} + b\mathbf{p}) , \in E
\end{align}

where $a,b,c$ are real numbers. Since any linear combination of vectors in $E$ is mapped into it, $E$ is invariant under $\mathbf{f_J}$.
Consequently, because the set $M_E = S^{d-1} \bigcap E$ is invariant under the flow map of $\mathbf{f_J}$, and both the sphere $S^{d-1}$ and the eigenspace $E$ are invariant, their intersection must also be invariant.
Hence, $M_E = { \mathbf{u} \in E: ||\mathbf{u}|| = 1 }$ is a circle $S^1$ (an equator) along which the only admissible dynamics are rotations (right panel of Figure \ref{fig:CLV_sphere}).
\end{proof}

\subsection{Stability of CLV Dynamics and the Projected CLV Operator}
\label{subsection::linear_stability}

We now analyse the linearized dynamics of CLVs around an arbitrary fixed point, $\mathbf{v}_i$, from Propositions \ref{prop: eig_fixed_points} and \ref{proposition 2}.
The time evolution of an infinitesimal perturbation to the CLV, $\delta \mathbf{u}$, is governed by the linearized version of \eqref{eq_clv projeciton form}, which reads

\begin{align} \label{eq: full_linearized_system}
\frac{d \delta\mathbf{u}}{dt} &= \mathbf{J} \delta \mathbf{u} - \left( \mathbf{v}_i^T\mathbf{J}\delta \mathbf{u}\right)\mathbf{v}_i - \left(\mathbf{v}_i^T\mathbf{J}\mathbf{v}_i\right) \delta \mathbf{u}
\nonumber\\ &= \left[\left(\mathbf{I} - \mathbf{v}_i\mathbf{v}_i^T \right)\mathbf{J} - \mu_i\mathbf{I} \right]\delta \mathbf{u} \nonumber\\ &= \mathbf{L}_{\mathbf{v}_i}\delta \mathbf{u}
\end{align}

where $\mathbf{L}_{\mathbf{v}_i}$ is the \textit{projected CLV operator (PCLO)} associated with the eigenvector $\mathbf{v}_i$ and eigenvalue ${\sigma_i = \mu_i +i\omega_i}$. To derive \eqref{eq: full_linearized_system}, we exploited the fact that the perturbation $\delta \mathbf{u}$ is orthogonal to $\mathbf{u}$ because of the norm constraint imposed by Equation \eqref{eq_clv projeciton form}, i.e., the perturbation cannot increase the norm; therefore, $\delta \mathbf{u}^T\mathbf{u} = 0$.

We analyse the CLV stability by calculating the eigenvalues and eigenvectors of the PCLO \eqref{eq: full_linearized_system}.

\begin{proposition}
The eigenvectors associated with the PCLO, $\mathbf{L}_{\mathbf{v}_i}$, are $\tilde{\mathbf{v}}_i^{(j)} = \left( \mathbf{I} - \mathbf{v}_i\mathbf{v}_i^T\right)\mathbf{v}_{j}$, where the subscript $i$ indicates the $i$th PCLO and the superscript $(j)$ indicates the $j$th eigenvector of the PCLO. Therefore, the PCLO eigenvectors are the orthogonal projections of the eigenvectors $\mathbf{v}_j$ of $\mathbf{J}$.
The eigenvalues of $\mathbf{L}_{\mathbf{v}_i}$ are $\tilde{\sigma}_i^{(j)} = \sigma_j - \sigma_i$, where $\sigma_i$ and $\sigma_j$ are the eigenvalues of $\mathbf{J}$.
\end{proposition}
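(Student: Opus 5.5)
Writing $\mathbf{P}_i = \mathbf{I} - \mathbf{v}_i\mathbf{v}_i^T$, the plan is to apply $\mathbf{L}_{\mathbf{v}_i} = \mathbf{P}_i\mathbf{J} - \mu_i\mathbf{I}$ directly to the candidate $\tilde{\mathbf{v}}_i^{(j)} = \mathbf{P}_i\mathbf{v}_j$ and reduce it to a multiple of itself. The whole argument rests on one identity, $\mathbf{P}_i\mathbf{J}\mathbf{P}_i = \mathbf{P}_i\mathbf{J}$. This holds because $\mathbf{J}\mathbf{v}_i = \sigma_i\mathbf{v}_i$ gives $\mathbf{P}_i\mathbf{J}\mathbf{v}_i\mathbf{v}_i^T = \sigma_i\mathbf{P}_i\mathbf{v}_i\mathbf{v}_i^T = \mathbf{0}$, since $\mathbf{P}_i\mathbf{v}_i = \mathbf{0}$ for a unit-norm $\mathbf{v}_i$, which is the fixed point from Proposition~\ref{prop: eig_fixed_points}. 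With the identity in hand,
\begin{equation*}
\mathbf{P}_i\mathbf{J}\,\tilde{\mathbf{v}}_i^{(j)} = \mathbf{P}_i\mathbf{J}\mathbf{P}_i\mathbf{v}_j = \mathbf{P}_i\mathbf{J}\mathbf{v}_j = \sigma_j\mathbf{P}_i\mathbf{v}_j = \sigma_j\tilde{\mathbf{v}}_i^{(j)} .
\end{equation*}
Subtracting $\mu_i\tilde{\mathbf{v}}_i^{(j)}$ then shows that $\tilde{\mathbf{v}}_i^{(j)}$ is an eigenvector of $\mathbf{L}_{\mathbf{v}_i}$ with eigenvalue $\sigma_j - \mu_i$.

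Next I would reconcile $\sigma_j - \mu_i$ with the stated $\sigma_j - \sigma_i$. The fixed points of \eqref{eq_clv projeciton form} are real (Proposition~\ref{prop: eig_fixed_points}, and Proposition~\ref{proposition 2} for the degenerate case), so $\omega_i = 0$ and $\sigma_i = \mu_i = \mathbf{v}_i^T\mathbf{J}\mathbf{v}_i$. Hence $\tilde{\sigma}_i^{(j)} = \sigma_j - \sigma_i$.

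It remains to check that these vectors are genuine and account for all the relevant eigenvalues. Since $\mathbf{J}$ is assumed diagonalizable, its eigenvectors form a basis, so $\mathbf{v}_j \notin \operatorname{span}\{\mathbf{v}_i\}$ for $j\neq i$ and therefore $\tilde{\mathbf{v}}_i^{(j)} \neq \mathbf{0}$. Moreover, $\mathbf{P}_i$ maps the $d-1$ vectors $\{\mathbf{v}_j\}_{j\neq i}$ to a basis of the orthogonal complement $\mathbf{v}_i^\perp$, which is exactly the space of admissible perturbations ($\delta\mathbf{u}^T\mathbf{v}_i = 0$). So the $d-1$ eigenpairs diagonalize $\mathbf{L}_{\mathbf{v}_i}$ restricted to the tangent space of $S^{d-1}$ at $\mathbf{v}_i$. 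For completeness I would also note the remaining direction: $\mathbf{L}_{\mathbf{v}_i}\mathbf{v}_i = -\mu_i\mathbf{v}_i$, which is normal to the sphere and excluded by the norm constraint. When $\sigma_j$ is complex, the same computation goes through over $\mathbb{C}^d$, giving the conjugate pair of eigenvalues $\sigma_j - \sigma_i$ and $\sigma_j^* - \sigma_i$.

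The main obstacle is really bookkeeping rather than analysis. One point is making sure $\mathbf{v}_i$ is real and unit-norm, so that $\mathbf{P}_i\mathbf{v}_i = \mathbf{0}$ and $\mu_i = \sigma_i$. The other is handling degeneracy: if $\sigma_j = \sigma_i$ for some $j \neq i$ (the case of Proposition~\ref{proposition 2}), the eigenvalue is $0$, which corresponds to neutral motion along the equator $M$. In that case I would just remark that the formula still holds.
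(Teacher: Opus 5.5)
Your proof is correct and is essentially the paper's argument: the paper applies $\mathbf{L}_{\mathbf{v}_i}$ to $\widetilde{\mathbf{V}}_i=(\mathbf{I}-\mathbf{v}_i\mathbf{v}_i^T)\mathbf{V}$ and uses $\mathbf{J}\mathbf{V}=\mathbf{V}\mathbf{\Lambda}$ to obtain $\mathbf{L}_{\mathbf{v}_i}\widetilde{\mathbf{V}}_i=\widetilde{\mathbf{V}}_i(\mathbf{\Lambda}-\mu_i\mathbf{I})$, which is your column-by-column computation in matrix form. Your version is in fact tighter: the paper removes the inner projection using the constraint $\delta\mathbf{u}=(\mathbf{I}-\mathbf{v}_i\mathbf{v}_i^T)\delta\mathbf{u}$, which fails for $\delta\mathbf{u}=\mathbf{v}_j$ unless $\mathbf{v}_j\perp\mathbf{v}_i$, whereas your identity $\mathbf{P}_i\mathbf{J}\mathbf{P}_i=\mathbf{P}_i\mathbf{J}$ makes the relation hold for every column; you also make explicit the identification $\mu_i=\sigma_i$ and the normal eigenpair $\mathbf{L}_{\mathbf{v}_i}\mathbf{v}_i=-\mu_i\mathbf{v}_i$, which the paper's subsequent remark instead reports as a zero eigenvalue.
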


\begin{proof}
$\delta\mathbf{u} = \left( \mathbf{I} - \mathbf{v}_i\mathbf{v}_i^T\right)\delta\mathbf{u}$ because of orthogonality. We express the perturbation in the eigenbasis as $\delta \mathbf{u} = \mathbf{V}\mathbf{a}$, where $\mathbf{V} = \left[\mathbf{v}_0, \mathbf{v}_1, \ldots, \mathbf{v}_{d-1} \right]$ is the matrix of eigenvectors of $\mathbf{J}$, and $\mathbf{a}$ are the components in the eigenbasis.
Hence \footnote{The vector $\mathbf{J}\,\delta\mathbf{u}$ is real because both the Jacobian $\mathbf{J}$ and the perturbation $\delta\mathbf{u}$ are real. Writing $\delta\mathbf{u}$ in the Jacobian eigenbasis gives $\mathbf{J}\,\delta\mathbf{u} = \mathbf{V}\mathbf{\Lambda}\mathbf{a}$, where $\mathbf{V}$, $\mathbf{\Lambda}$, and $\mathbf{a}$ are generally complex. However, since $\mathbf{J}$ is real, these complex quantities occur in conjugate pairs, so the imaginary parts cancel and $\mathbf{J}\,\delta\mathbf{u}$ remains real.}
\begin{align} \label{eq: Derivation of eigenvectors}
& \mathbf{L}_{\mathbf{v}_i}\left( \mathbf{I} - \mathbf{v}_i\mathbf{v}_i^T\right)\delta \mathbf{u} =\nonumber \\ &= \mathbf{L}_{\mathbf{v}_i} \delta\mathbf{u} \nonumber\\ &= \left( \mathbf{I} - \mathbf{v}_i\mathbf{v}_i^T\right)\mathbf{J}\delta \mathbf{u} - \mu_i \delta\mathbf{u} \nonumber\\
&= \left( \mathbf{I} - \mathbf{v}_i\mathbf{v}_i^T\right)\mathbf{J}\mathbf{Va} - \left( \mathbf{I} - \mathbf{v}_i\mathbf{v}_i^T\right)\mu_i \mathbf{Va} \nonumber\\
&= \left( \mathbf{I} - \mathbf{v}_i\mathbf{v}_i^T\right) \mathbf{V} \left(\mathbf{\Lambda} - \mu_i \mathbf{I} \right) \mathbf{a}.
\end{align}
Let $\widetilde{\mathbf{V}}_{i} = \left( \mathbf{I} - \mathbf{v}_i\mathbf{v}_i^T\right) \mathbf{V}$ and $\widetilde{\mathbf{\Lambda}}_{i} = \mathbf{\Lambda} - \mu_i \mathbf{I}$. From \eqref{eq: Derivation of eigenvectors}, $\mathbf{L}_{\mathbf{v}_i}\widetilde{\mathbf{V}}_{i}\mathbf{a} = \widetilde{\mathbf{V}}_{i}\widetilde{\mathbf{\Lambda}}_{i}\mathbf{a}$, which implies that $\widetilde{\mathbf{V}}_{i}$ and $\widetilde{\mathbf{\Lambda}}_{i}$ are the matrices of eigenvectors and eigenvalues of $\mathbf{L}_{\mathbf{v}_i}$, respectively.
\end{proof}

The matrices $\widetilde{\mathbf{V}}_i$ and $\widetilde{\mathbf{\Lambda}}_i$ are singular because the projection operator applied to $\mathbf{v}_i$ returns a trivial vector

\begin{align}
\tilde{\mathbf{V}}_i &= \left(\mathbf{I} - \mathbf{v}_i\mathbf{v}_i^T\right)\mathbf{V} \nonumber\\
&= \left(\mathbf{I} - \mathbf{v}_i\mathbf{v}_i^T\right)[\mathbf{v}_0, ..., \mathbf{v}_i,...,\mathbf{v}_{d-1}] \nonumber\\
&= [\left(\mathbf{I} - \mathbf{v}_i\mathbf{v}_i^T\right)\mathbf{v}_0, ..., 0, ..., \left(\mathbf{I} - \mathbf{v}_i\mathbf{v}_i^T\right)\mathbf{v}_{d-1}],
\end{align}

with the corresponding eigenvalue being zero

\begin{align} \label{eigenvalue_PCLO}
\tilde{\mathbf{\Lambda}}_i &= \text{diag}\left(\sigma_0, ...,\sigma_i, ..., \sigma_{d-1} \right) - \mu_i\mathbf{I} \nonumber\\ &= \text{diag}\left(\sigma_0 - \mu_i, ..., 0, ..., \sigma_{d-1} - \mu_i \right).
\end{align}

This is expected, as the $d$-dimensional perturbation vector $\delta \mathbf{u} \in \mathbb{R}^d$ is constrained in the $(d-1)$-dimensional tangent space of the CLV sphere $T_{\mathbf{v}_i}S^{d-1}$, that is, the orthogonal space to the vector $\mathbf{v}_i$. Since $\mathbf{L}_{\mathbf{v}_i}$ is an operator that evolves the $d$-dimensional vector $\delta \mathbf{u}$ in a $(d-1)$-dimensional space, the operator must be singular.

The {real part of the spectrum} of the PCLO consists of $\tilde{\mu}_i^{(j)} = \mu_j - \mu_i$, with $i,j=0,1,\ldots,d-1$, which quantifies the {relative convergence or divergence rate of perturbations around $\mathbf{v}_i$}. Hence, a fixed point, $\mathbf{v}_i$, is stable if the {real parts of all the non-trivial eigenvalues} of the linearized operator $\mathbf{L}_{\mathbf{v}_i}$ are negative. There are two cases. First, if the geometric multiplicity is $g=1$, there can be only one stable fixed point $\mathbf{v}_i$ corresponding to the {eigenvalue with the largest real part $\mu_i$}; the entries of $\mathrm{Re}(\tilde{\mathbf{\Lambda}}_i) = \text{diag}(\mu_0 - \mu_i, ..., \mu_{d-1}-\mu_i)$ are all negative because $\mu_i > \mu_j$ for any $j$ from $0$ to $d-1$, except when $j=i$, in which case the entry is zero. This means that the eigenvector with the largest growth rate is the unique attracting direction (fixed point) of the CLVs on the unit sphere. Thus, all nearby CLVs are attracted by this eigenvector, which geometrically leads to CLV tangency along the dominant eigenvector (Proposition \ref{prop: eig_fixed_points}, left panel of Figure \ref{fig:CLV_sphere}).
Second, if the multiplicity is $g > 1$, the CLVs are attracted to the hyperplane (as opposed to a point) spanned by the $g > 1$ eigenvectors (Proposition \ref{proposition 2}, middle panel of Figure \ref{fig:CLV_sphere}).

In conclusion, the stability of a fixed point, $\mathbf{v}_i$, is dictated by the \emph{spectral gaps} in \eqref{eigenvalue_PCLO}. Large negative gaps correspond to rapid decay of CLVs toward $\mathbf{v}_i$, whereas small gaps correspond to weak attraction. Spectral gaps will play a central role in Section \ref{sec:adiabatic_approx}. Table \ref{tab:Jacobian and CLVs} summarizes the connections between the CLV dynamics and the eigenstructure of the Jacobian.

\begin{table}
\caption{Summary of the connection between the properties of the Jacobian matrix and CLV dynamics. }
\label{tab:Jacobian and CLVs}       
\begin{tabular}{| p{0.225\textwidth} | p{0.225\textwidth} |}
    \hline
    \multicolumn{1}{|c|}{\textbf{Jacobian} $\mathbf{J}(t)$} & \multicolumn{1}{c|}{\textbf{CLV dynamics} $\mathbf{u}(t)$} \\
    \hline
    Stationary eigenvector with geometric multiplicity $g= 1$ & Discrete fixed points \\
    \hline
    Stationary eigenvector with geometric multiplicity $g> 1$ & Higher dimensional invariant set\\
    \hline
    Complex conjugate eigenvalue pairs & Rotations around an equator \\
    \hline
    Spectral gaps & \textcolor{black}{Convergence rates}\\ 
    \hline
  \end{tabular}
\end{table}

%Consider also that the stationarity of $\mathbf{u}_0$ doesn't imply the stationarity of $\mu_0$, that is both $\mathbf{J}(t)$ and $\mu_0(t)$ may change in time, and the stability of the linear operator may change in time.

\begin{figure}
    \centering
    \includegraphics[width=1\linewidth]{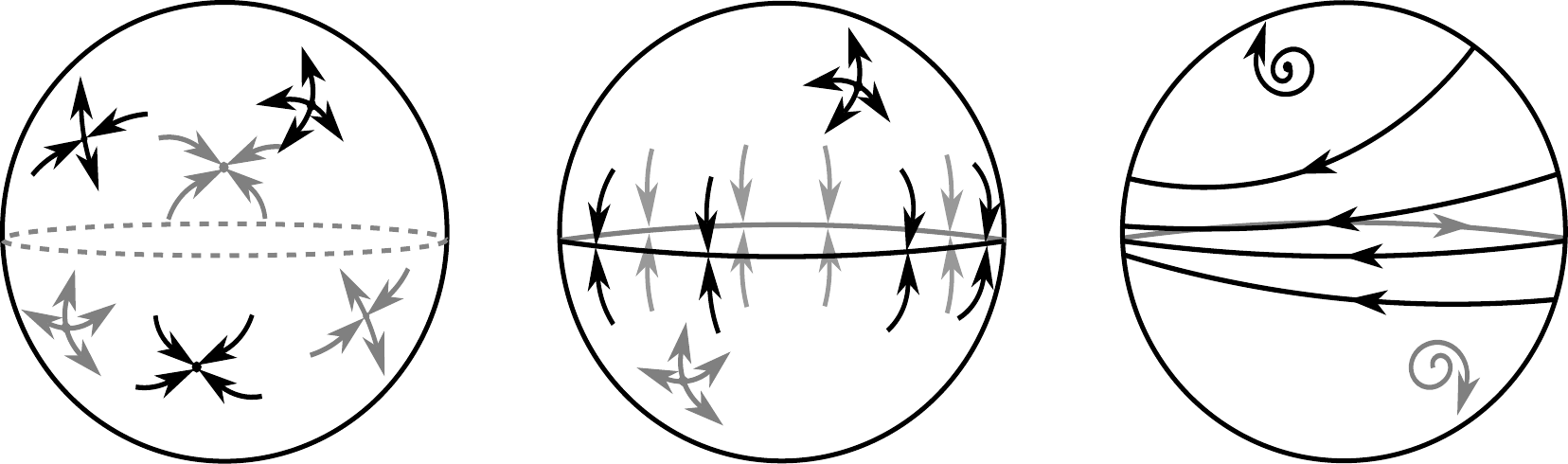}
    \caption{On the left: CLV phase space structure when the eigenvalues of the Jacobian $\mathbf{J}$ of Equation \eqref{dynamical_system} are real and follow the inequality $\mu_1 > \mu_2 > \mu_3$. In the middle: CLV phase space structure when the eigenvalues are real and follow $\mu_1 = \mu_2>\mu_3$. On the right: CLV phase space structure when $\sigma_1 = \sigma_2^*$ are complex conjugate and $\mathrm{Re}(\sigma_1) = \mathrm{Re}(\sigma_2) > \mu_3$. In black, we indicate the fixed points and invariant sets on the foreground hemisphere, in gray we indicate fixed points and invariant sets in the background hemisphere.}
    \label{fig:CLV_sphere}
\end{figure}

\section{Fast-Slow Systems} \label{section:fast-slow systems}
Fast-slow systems are dynamical systems with multiple time scales, which can be described by equations in which some variables vary more rapidly than others \cite{kuehnMathematicalFrameworkCritical2011, kuehnMultipleTimeScale2015}. {An extreme event or a critical transition can occur when the slow manifold becomes locally unstable (Figure \ref{fig: manifold}).} In fast-slow systems, the state vector of the original system \eqref{dynamical_system} can be partitioned as $\mathbf{z}=[\mathbf{x},\mathbf{y}]$ with governing equations

\begin{equation} \label{fast-slow-t}
    \begin{aligned}
     \varepsilon   \dot{\mathbf{x}} &= \mathbf{f}(\mathbf{x}, \mathbf{y}; \varepsilon) \\
      \dot{\mathbf{y}} &= \mathbf{g}(\mathbf{x}, \mathbf{y}; \varepsilon),
    \end{aligned}
\end{equation}
where $\dot{}=\frac{d}{d\tau}$ is the time derivative with respect to the slow time $\tau$, $0 < \varepsilon\ll1$ is the \textit{time-scale parameter}, $\mathbf{x} \in \mathbb{R}^m$ are the \textit{fast variables}, and $\mathbf{y} \in \mathbb{R}^n$ are the \textit{slow variables}, with $n + m = d$. The functions $\mathbf{f}: \mathbb{R}^m\times\mathbb{R}^n \to \mathbb{R}^m$ and $\mathbf{g}: \mathbb{R}^m\times\mathbb{R}^n \to \mathbb{R}^n$ are at least as smooth as $\mathbf{F}$ in Equation \eqref{dynamical_system}. 
By performing a change of time variable $t = \tau/\varepsilon$, \eqref{fast-slow-t} is recast as a standard perturbation problem \cite{kuehnMultipleTimeScale2015}

\begin{equation} \label{fast-slow-tau}
    \begin{aligned}
      \mathbf{x}' &= \mathbf{f}(\mathbf{x}, \mathbf{y}; \varepsilon) \\
      \mathbf{y}' &= \varepsilon \mathbf{g}(\mathbf{x}, \mathbf{y}; \varepsilon),
    \end{aligned}
\end{equation}
where $'$ indicates the time derivative with respect to the fast time scale $t$. 
{In the singular limit $\varepsilon = 0$, \eqref{fast-slow-t} becomes the \textit{slow subsystem}} \cite{kuehnMultipleTimeScale2015}

\begin{equation} \label{slow-subsystem}
    \begin{aligned}
        \mathbf{0} &= \mathbf{f}(\mathbf{x},\mathbf{y};0), \\
        \dot{\mathbf{y}} &= \mathbf{g}(\mathbf{x},\mathbf{y};0), 
    \end{aligned}
\end{equation}

which is a set of differential-algebraic equations, where $\mathbf{f}(\mathbf{x},\mathbf{y};0) = \mathbf{0}$ defines a set on which the slow dynamics take place. The set $C_0 = \{(\mathbf{x},\mathbf{y}) \in \mathbb{R}^m \times \mathbb{R}^n:\mathbf{f}(\mathbf{x},\mathbf{y};0) = \mathbf{0}\}$ is referred to as the \textit{critical manifold}.
The solution of the slow subsystem is the \textit{slow flow}. On the other hand, when $\varepsilon=0$ in the fast-time formulation \eqref{fast-slow-t}, we obtain the \textit{fast subsystem}

\begin{equation} \label{fast-subsystem}
    \begin{aligned}
        \mathbf{x}' &= \mathbf{f}(\mathbf{x},\mathbf{y};0),\\
        \mathbf{y}' &= \mathbf{0}.
    \end{aligned}
\end{equation}

The solution of the fast subsystem is the \textit{fast flow}, in which the set of fixed points is the critical set $(\mathbf{x}_0,\mathbf{y}_0) \in C_0$. 
We consider normally hyperbolic invariant manifolds, which are subsets $S_0 \subset C_0$ over which the Jacobian of the fast vector field with respect to the fast variables, $\nabla_\mathbf{x}\mathbf{f}$, has no eigenvalues with zero real part. Such manifolds are hyperbolic fixed points of $\mathbf{x}' = \mathbf{f}(\mathbf{x},\mathbf{y};0)$. Normally hyperbolic invariant manifolds are attracting if the eigenvalues in the spectrum of $\nabla_\mathbf{x}\mathbf{f}(\mathbf{x}^*,\mathbf{y}^*)$, with $(\mathbf{x}^*,\mathbf{y}^*)\in S_0$, have negative real parts, whereas they are repelling if they {have positive real parts}. 
Fenichel's theorem \cite{fenichelGeometricSingularPerturbation1979, kuehnMathematicalFrameworkCritical2011, kuehnMultipleTimeScale2015} guarantees the existence of a locally invariant manifold $S_{\varepsilon}$ diffeomorphic to $S_0$ that is $\mathcal{O}(\varepsilon)$ close to $S_0$ as the time-scale parameter $\varepsilon \to 0$. Such a manifold is the \textit{slow manifold}. Fenichel's theorem ensures that solutions obtained by analysing the fast and slow subsystems independently in the singular limit can be combined to yield a solution that is 
$\mathcal{O}(\varepsilon)$-close to {the solution of Equation \eqref{fast-slow-t}}.

\subsection{Tangent Space} \label{subsection: tangent space}

We analyse the tangent dynamics of fast-slow systems. This decomposition allows us to apply Fenichel's theorem and describe the behaviour of CLVs. We consider the tangent dynamics associated with the slow subsystem \eqref{slow-subsystem}

\begin{subequations} \label{slow-tangent_dynamics}
\begin{align}
    \delta \mathbf{x} &= -(\nabla_{\mathbf{x}}\mathbf{f})^{-1}\nabla_{\mathbf{y}}\mathbf{f}\,\delta\mathbf{y}, \label{slow-tangent_dynamics:a} \\
    \delta\dot{\mathbf{y}} &= \left(-\nabla_{\mathbf{x}}\mathbf{g}(\nabla_{\mathbf{x}}\mathbf{f})^{-1}\nabla_{\mathbf{y}}\mathbf{f}
    + \nabla_\mathbf{y}\mathbf{g}\right)\delta\mathbf{y}. \label{slow-tangent_dynamics:b}
\end{align}
\end{subequations}

Equations \eqref{slow-tangent_dynamics} are the Jacobian equations of the reduced slow dynamics \eqref{slow-subsystem}, which describe the evolution of infinitesimal perturbations within the $n$-dimensional tangent space of the critical manifold, $\mathbf{f}(\mathbf{x},\mathbf{y};0)=\mathbf{0}$. This space is the \emph{slow tangent space}. Let us consider a generic perturbation $\delta \mathbf{z} = \left(\delta \mathbf{x}, \delta \mathbf{y} \right)^T \in \mathbb{R}^d$ constrained to the slow tangent space. Equation \eqref{slow-tangent_dynamics:a} shows that perturbations in the fast directions, {$\delta \mathbf{x}$}, are {uniquely determined by perturbations in the slow direction, $\delta \mathbf{y}$,} through a linear relation. {Consequently, the {tangent space of the slow manifold} is spanned by vectors of the form
\begin{equation} \label{slow_eogenspace}
\mathbf{v}_{s}
=
\left(
- (\nabla_\mathbf{x}\mathbf{f})^{-1}\nabla_\mathbf{y}\mathbf{f}\,\mathbf{v}_{\nu},
\;
\mathbf{v}_{\nu}
\right)^{\!T},
\end{equation}
{where {$\mathbf{v}_{\nu} \in \mathbb{R}^n$} is an arbitrary vector in the direction of the slow variables and $\mathbf{v}_s \in \mathbb{R}^{n+m}$ is obtained by lifting a slow perturbation $\mathbf{v}_{\nu}$ to the full space using the constraint imposed by the manifold.}}

Using the framework developed in Section \ref{section:clv_equation}, Equation \eqref{slow-tangent_dynamics:b} supports the existence of $n$ slow CLVs, which evolve on an $(n-1)$-dimensional sphere embedded in the slow tangent space, together with their associated slow Lyapunov exponents. Slow CLVs are elements of the $(n-1)$-dimensional sphere embedded in a $d$-dimensional space, with $d = m+n$; their representation in $d$ dimensions is $\mathbf{u}_{s}= \left(-(\nabla_\mathbf{x}\mathbf{f})^{-1}\nabla_\mathbf{y}\mathbf{f}\,\mathbf{u}_{\nu},\; \mathbf{u}_{\nu} \right)^{\!T}$, where $\mathbf{u}_{\nu}$ is the $n$-dimensional CLV. Beyond this general geometric characterisation, further analytical insight into the structure of the slow CLVs requires additional assumptions on the functional form of $\mathbf{f}$ and $\mathbf{g}$ and on the properties of their Jacobians.
To complete the dynamical picture of CLVs in fast-slow systems, we consider the tangent space dynamics of the fast subsystem \eqref{fast-subsystem}

\begin{equation} \label{fast-tangent_dynamics}
    \begin{aligned}
        \delta \dot{\mathbf{x}} &= \nabla_{\mathbf{x}}\mathbf{f} \,\delta \mathbf{x} + \nabla_{\mathbf{y}}\mathbf{f}\, \delta \mathbf{y} \\
        \delta\dot{\mathbf{y}} &= \mathbf{0},
    \end{aligned}
\end{equation}
with the Jacobian being 

\begin{align} \label{fast-slow jacobian}
\mathbf{J}_f =
\begin{bmatrix}
\nabla_\mathbf{x}\mathbf{f} & \nabla_\mathbf{y}\mathbf{f} \\
0 & 0
\end{bmatrix}.
\end{align}

The form of the Jacobian shows the existence of a splitting of the tangent space of the phase space on the slow manifold.  
$\mathbf{J}_f$ is a block triangular matrix and its non-trivial eigenvalues are the eigenvalues of $\nabla_\mathbf{x}\mathbf{f}$, which are non-zero under normal hyperbolicity assumptions. We refer to the first $m$ eigenvalues as the \textit{fast eigenvalues} $\eta_1, ..., \eta_m$ and to the trivial eigenvalues $\nu_{1},...,\nu_{n}=0$ as the \textit{slow eigenvalues}. If the slow manifold is attracting, the fast eigenvalues are all negative, $\eta_1,...,\eta_m < 0$. 

The eigenspaces associated with these eigenvalues are the \textit{fast eigenspace} $E_f$ and \textit{slow eigenspace} $E_s$, respectively. Under normal hyperbolicity assumptions, the matrix $\nabla_\mathbf{x}\mathbf{f}$ is non-singular; therefore, the fast and slow eigenspaces must be linearly independent of each other, that is, the hyperplanes defined by $E_f$ and $E_s$ must be transversal to each other, and their principal angle is non-zero.

The fast eigenvectors, i.e., the eigenvectors associated with the fast eigenvalues, can be computed from the Jacobian \eqref{fast-slow jacobian} and have a simple form $\mathbf{v}_{f} = \left( \mathbf{v}_{\eta}, \mathbf{0}\right)^T$, where $\mathbf{v}_{\eta}$ denotes the eigenvectors of the Jacobian $\nabla_\mathbf{x} \mathbf{f}$
\begin{equation} 
\begin{bmatrix}
\nabla_\mathbf{x}\mathbf{f} & \nabla_\mathbf{y}\mathbf{f} \\
0 & 0
\end{bmatrix}  
\begin{bmatrix}
    \mathbf{v}_{\eta} \\
    \mathbf{0}
\end{bmatrix} = 
\begin{bmatrix}
    \nabla_\mathbf{x}\mathbf{f} \, \mathbf{v}_{\eta} \\
    \mathbf{0}
\end{bmatrix} = \eta \mathbf{v}_{f}\, .
\end{equation} \label{eq: fast_eigenvec}

If $\nabla_\mathbf{x} \mathbf{f}$ is non-singular, the eigenvectors $\mathbf{v}_{\eta}$ form a basis of $\mathbb{R}^m$, and therefore $\mathbf{v}_{f}$ spans the eigenspace $\mathbb{R}^m$. Hence, if the hypothesis of normal hyperbolicity is satisfied, the fast eigenspace is stationary. For the slow eigenvectors, we find that $\mathbf{w}_{s} = \left(- (\nabla_\mathbf{x}\mathbf{f})^{-1}\nabla_\mathbf{y}\mathbf{f} \,\mathbf{w}_{\nu}, \mathbf{w}_{\nu} \right)^T$, where $\mathbf{w}_{\nu}$ can be any vector in $\mathbb{R}^n$,

\begin{align} \label{fast kernel}
\begin{aligned}
&\begin{bmatrix}
\nabla_\mathbf{x}\mathbf{f} & \nabla_\mathbf{y}\mathbf{f} \\
0 & 0
\end{bmatrix}  
\begin{bmatrix}
    - (\nabla_\mathbf{x}\mathbf{f})^{-1}\nabla_\mathbf{y}\mathbf{f} \,\mathbf{w}_{\nu} \\
    \mathbf{w}_{\nu}
\end{bmatrix} 
= \\
&=\begin{bmatrix}
    -\nabla_\mathbf{y}\mathbf{f} \, \mathbf{w}_{\nu} + \nabla_\mathbf{y}\mathbf{f} \, \mathbf{w}_{\nu} \\
    \mathbf{0}
\end{bmatrix} = \mathbf{0} \, , \; \forall \, \mathbf{w}_{\nu} \in \mathbb{R}^n \, .
\end{aligned}
\end{align}

The formula for $\mathbf{w}_{s}$ is well defined under the normal hyperbolicity hypothesis, as $\nabla_\mathbf{x}\mathbf{f}$ is non-singular. 
The family of vectors defined by $\mathbf{w}_{s}$ spans the kernel of the Jacobian in the fast-time formulation $\mathbf{J}_f$, as shown in Equation \eqref{fast kernel}, and it coincides with the slow eigenspace defined in \eqref{slow_eogenspace}, which corresponds to the tangent space of the slow manifold. Hence, the kernel of the Jacobian in fast time coincides with the tangent space of the slow manifold, $E_s = \text{Ker}(\mathbf{J}_f) = T_\mathbf{p}C_0$. 

We have therefore defined a decomposition of the tangent space of the phase space restricted to $C_0$. The decomposition highlights two distinct, well-defined spaces: the fast eigenspace $E_f$ and the slow eigenspace $E_s$. These two subspaces are transversal to each other and, therefore, their principal angle $\theta$ is non-zero as long as the state of \eqref{dynamical_system} is on the slow manifold. Such a decomposition is inherited from the fast-slow decomposition of the dynamical system. A schematic representing the splitting of the slow and fast subspaces is shown in Figure \ref{fig: manifold}. %\textcolor{red}{[I would hope we can make this entire section easier to digest. I continue finding this difficult to understand, especially what is the goal of this? What are all these new vectors being introduced? I am reading and thinking with the reviewer's hat.]}

\section{Universal Route to Extreme Events and Critical Transitions} \label{sec: adiabatic condition and dynamical regimes}

\subsection{Adiabatic Condition} 
\label{sec:adiabatic_approx}

As explained in Section \ref{subsection::linear_stability}, tangency between CLVs can be interpreted as the convergence of multiple CLV trajectories, which are solutions of Equation \eqref{eq_clv projeciton form}, toward a common stable fixed point (i.e., an eigenvector). To define fixed points and more general invariant sets, we showed that a stationarity hypothesis is necessary, that is, we required that eigenvectors, or spaces spanned by them, were stationary. However, this assumption breaks down if the underlying system \eqref{dynamical_system} has a non-stationary eigenbasis, which is the subject of this section.

Consider an eigenvector \( \mathbf{v}_i \) of \( \mathbf{J}(t) \) for which the eigenvalue matrix of the associated PCLO \eqref{eigenvalue_PCLO} has only non-positive entries, \( \tilde{\mu}_i^{(j)} \leq 0 \), so that \( \mathbf{v}_i \) is a stable zero of \( \mathbf{f}_{\mathbf{J}} \) from Equation \eqref{eq_clv projeciton form}. {If the rate of convergence} of CLVs toward this stable zero is much larger than the rate at which the eigenvector itself changes in time, then the eigenvector behaves approximately as a fixed point.
    Physically, this means that the eigenvector evolves slowly compared to the {convergence rate} of the CLVs: the CLVs decay toward the eigenvector before it has time to move significantly. As a result, the eigenvector acts approximately like a fixed point. We refer to this condition as the \textit{adiabatic condition}
\begin{equation} \label{adiabatic condition}
    \frac{\lVert \dot{\mathbf{v}} \rVert}{|\tilde{\mu}|} \ll 1,
\end{equation}
where \( \lVert \dot{\mathbf{v}} \rVert \) is the magnitude of the time derivative of the eigenvector \( \mathbf{v} \), and \( |\tilde{\mu}| \) is the smallest (in magnitude) eigenvalue of the PCLO associated with \( \mathbf{v} \) \eqref{eigenvalue_PCLO}. {The presence of multiple time scales in fast–slow systems leads to spectral gaps, as shown in Section \ref{subsection: tangent space}, which fulfil the adiabatic condition by construction.}  

\subsection{Definitions of Extreme Events and Critical Transitions} \label{sec: definition of extreme event}
Extreme events in fast-slow dynamical systems occur because the slow-dimensional manifold \cite{farazmandExtremeEventsMechanisms2019} becomes locally unstable, which means that at least one fast {eigenvalue becomes unstable}. Critical transitions occur when the slow dynamics bifurcate to another stable slow branch \cite{schefferEarlywarningSignalsCritical2009, schefferAnticipatingCriticalTransitions2012}. {Therefore, critical transitions are a subset of extreme events.}

\subsection{Route to Extreme Events and Critical Transitions} \label{dynamical_regimes}

{An extreme event or critical transition occurs when a fast eigenvalue changes stability (Section \ref{sec: definition of extreme event}). The stability change must pass through an intermediate near-neutral state, where the eigenvalue approaches zero. The onset of the extreme event is therefore expected to be organised by a sequence of dynamical changes. This motivates the existence of a \textit{sequence} or \textit{cascade} of \textit{regimes}: a \textit{slow regime}, a \textit{transition regime}, and a \textit{critical regime}.} 

\begin{enumerate}
\item {\textbf{Slow regime}}. When the state of the system~\eqref{fast-slow-t} evolves on a slow manifold, the fast variables are {driven by} the slow variables. The slow manifold~\eqref{fast-subsystem}, therefore, {is} approximately the set of fixed points of the fast dynamics (Section \ref{section:fast-slow systems}). At the fixed points, the CLVs coincide with the eigenvectors of the Jacobian~\eqref{fast-slow jacobian}. As a result, CLVs associated with fast time scales become tangent to the eigenvectors of the fast subsystem. In contrast, CLVs associated with the slow dynamics evolve according to the linearisation of the reduced slow flow (Equation~\eqref{slow-tangent_dynamics}), and lie in the kernel of the fast-time Jacobian (Section~\ref{subsection: tangent space}), which is the tangent space of the slow manifold (Section~\ref{subsection: tangent space}). Under the assumption that the critical manifold is normally hyperbolic and invariant, the fast and slow eigenspaces remain transversal (Section~\ref{subsection: tangent space}) {to each other}, preventing the formation of tangencies both among fast CLVs and between fast and slow CLV subspaces while the system remains on the slow manifold.

\item {\textbf{Transition regime}}. {Before a hyperbolic singularity, one or more fast eigenvalues of $\mathbf{J}$, which are initially negative, approach zero. In the short period of transition from negative to positive eigenvalues, one or more CLVs undergo a \textit{transition regime} and still follow Equation~\eqref{eq_clv projeciton form}. {During the transition regime, a CLV that is initially associated with the fast dynamics undergoes a change in time scale as the corresponding fast eigenvalue approaches zero. As the separation between fast and slow eigenvalues is lost, the CLV separates from the fast eigenvectors of the Jacobian.} When this occurs, two distinct behaviours may arise, both governed by the CLV evolution Equation~\eqref{eq_clv projeciton form} and the spectral properties of the Jacobian. In the first case, the fast CLV is repelled from the unstable fast eigenspace and collapses toward the tangent space of the slow manifold, which coincides with the kernel of the fast-time Jacobian (Section~\ref{subsection: tangent space}), in accordance with Proposition \ref{proposition 2}. This collapse produces a tangency between the fast and slow CLV subspaces. In the second case, the interaction between a fast eigenvalue and a slow eigenvalue leads to the formation of a complex conjugate pair, which induces a rotational motion of the CLV on the unit sphere, as described in Proposition~\ref{proposition 3}. Both mechanisms signal a breakdown of the slow–fast separation, thereby serving as precursors of an impending extreme event or critical transition.}

\item {\textbf{Critical regime}}. After the hyperbolic singularity, at least one of the fast eigenvalues becomes positive and the normally hyperbolic invariant manifold on which the dynamics evolve becomes repelling and pushes the state away, generating an extreme event, as in Figure \ref{fig: manifold}. {After the transition regime, the dynamics enter a \textit{critical regime}, in which one or more fast eigenvalues become positive, $\mathrm{Re}(\eta) \gg \varepsilon$, and the state of Equation~\eqref{fast-slow-t} is repelled from the invariant manifold. When the adiabatic condition~\eqref{adiabatic condition}, $\frac{\|\dot{\mathbf{v}}\|}{|\tilde{\mu}|} \ll 1,$ is satisfied, the eigenvector associated with the most positive eigenvalue acts approximately as an attracting fixed point of the CLV dynamics, as discussed in Section~\ref{subsection::linear_stability}. As a result, multiple CLVs are drawn toward this direction, producing tangencies along the fast eigendirection. Consequently, the fast and slow CLV subspaces, which are transversal during the slow regime, can develop one or more vanishing principal angles during the extreme event, giving rise to tangencies as the system enters the critical regime.}

\end{enumerate} 
{These three regimes prior to an extreme event are {clearly separated} in explicit fast-slow systems because the spectral gap is large.}
%The sharp division in time scales generates large eigenvalue gaps, which in turn makes the adiabatic condition satisfied. However, many systems do not admit an explicit fast–slow decomposition, and in such cases there is no guarantee that the dynamical regimes described above will occur in a well-defined manner. Nevertheless, a wide class of systems exhibiting extreme or bursting events display an implicit separation of time scales, even in the absence of a clearly identifiable small parameter. In these systems, the same dynamical regimes can still emerge, albeit in a less sharply separated form than in explicitly fast–slow systems.}
We introduce pseudo-algorithms to compute two precursors: Precursor 1 in Figure \ref{angle_algo} and Precursor 2 in Figure \ref{lambda_algo}, based on the dynamical properties of CLVs in fast-slow systems.
\begin{figure}
    \centering
    \includegraphics[width=0.95\linewidth]{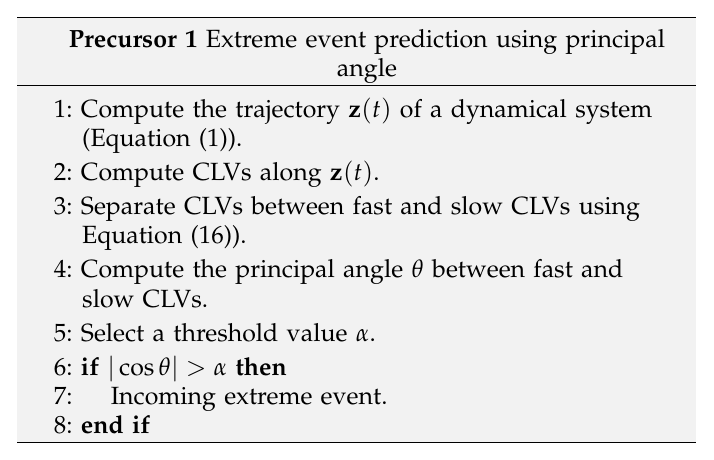}
    \caption{Schematic of the pseudo-algorithm for angle-based precursor.}
    \label{angle_algo}
\end{figure}
Finally, the cascade of events that we identified also explains the critical slowing down identified in  \cite{dakosSlowingEarlyWarning2008}. 

\begin{figure}
    \centering
    \includegraphics[width=0.95\linewidth]{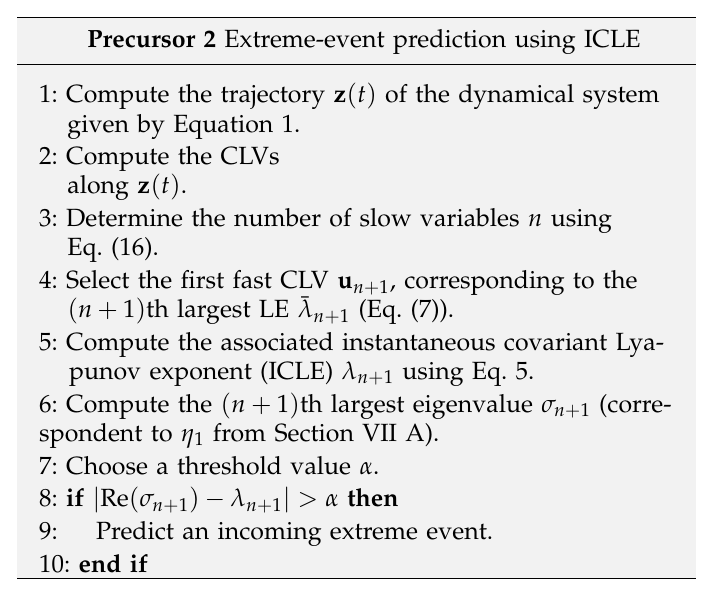}
    \caption{Schematic of pseudo-algorithm of ICLE-based precursor.}
    \label{lambda_algo}
\end{figure}

\begin{figure}
    \centering
    \includegraphics[width=0.95\linewidth]{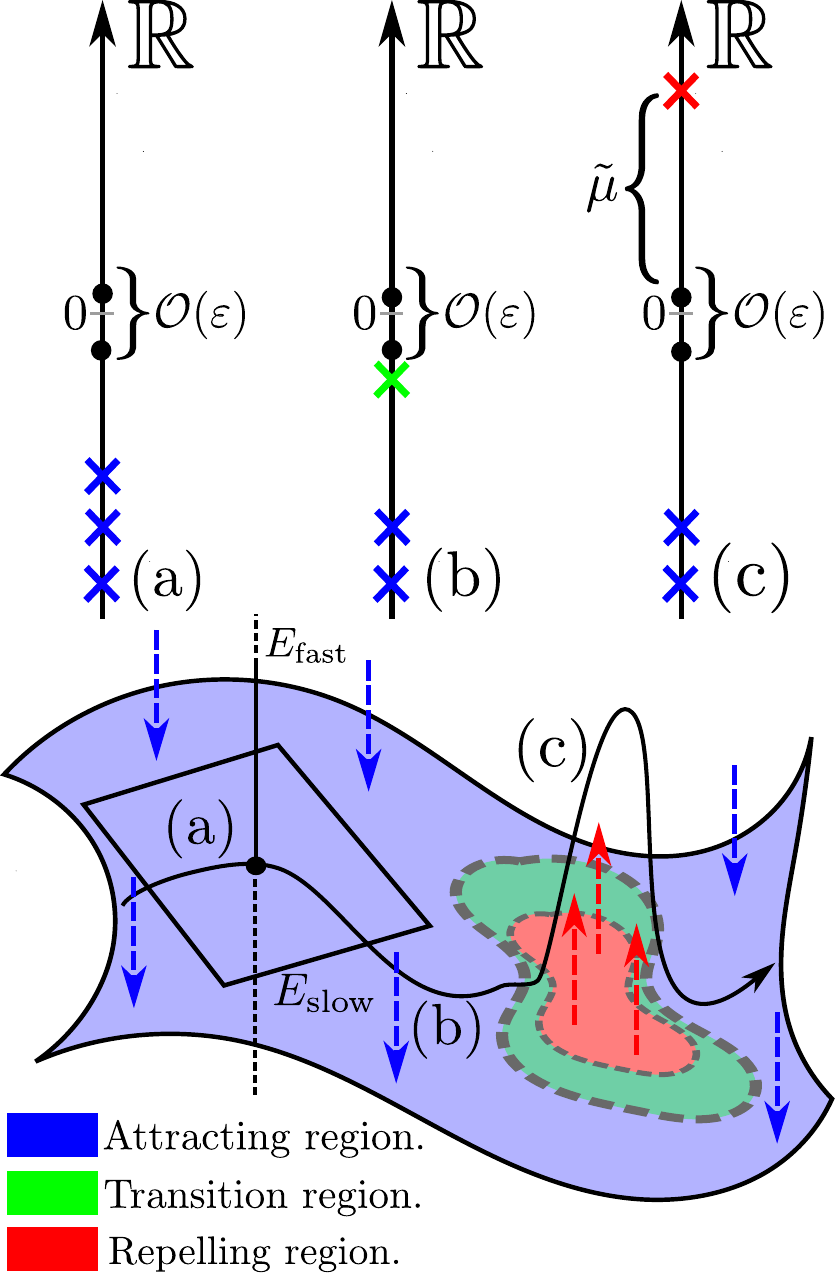}
    \caption{Top: real part of the spectrum of the Jacobian in the three dynamical regimes: the slow regime, the transition regime, and the critical regime. In blue, green, and red the fast eigenvalues, and in black the slow eigenvalues. {If the system is chaotic, at least one slow eigenvalue is positive on average}. Bottom: representation of the manifold in the three dynamical regimes. (a): slow regime, all fast eigenvalues are negative, the state is on the attracting critical manifold, and the fast CLVs are tangent to the fast eigenvectors. Fast and slow subspaces are transversal. (b): transition regime, one of the fast eigenvalues is about to become unstable and generates an instability. The critical manifold is about to become repelling. In this regime, one of the fast CLVs interacts with the slow CLVs, often via a tangency or a complex conjugation. This regime can be used as a precursor to the extreme event. (c): critical regime. The state is being repelled from the critical manifold. One of the fast eigenvalues becomes unstable and much larger than $\varepsilon$, making the corresponding fast eigenvector attracting. The strong attraction of the fast eigenvector generates tangencies between the fast and slow CLV spanned subspaces.}
    \label{fig: manifold}
\end{figure}

\subsection{An Analytical Example} \label{subsection: vanderpol}

The Van der Pol oscillator in its fast-slow form is a nonlinear dynamical system that exhibits critical transitions \cite{kuehnMathematicalFrameworkCritical2011, kuehnMultipleTimeScale2015}

\begin{equation} \label{slow_vdp}
\begin{aligned} 
\varepsilon \frac{dx}{d\tau} &= y - \frac{x^3}{3} + x, \\
\frac{dy}{d\tau} &= -x.
\end{aligned}
\end{equation}

The state of the system converges to a limit cycle.
The cubic term in the first equation introduces a dynamical fold bifurcation, which makes the state rapidly transition between two attracting branches of the critical manifold defined by the cubic $y = \frac{x^3}{3} - x$. 

The dynamics on the limit cycle can be decomposed into two subsystems (Section \ref{section:fast-slow systems}). The slow subsystem describes the dynamics on the stable slow {manifold}, represented by the attracting branches of the cubic. The fast subsystem describes the dynamics of the transitions between the two cubic branches. 
The eigenvalues of the Jacobian of \eqref{slow_vdp} in fast-time form are $\sigma_{1,2} = \left( \xi \mp \sqrt{\xi^2 - 4\varepsilon} \right)/2$ and the corresponding eigenvectors are $\mathbf{v}_{1,2} = \left( -\sigma_{1,2}, \varepsilon \right)^T$, where $\xi = 1-x^2$. 
We expand both eigenvalues in a Taylor series with respect to $\varepsilon$ around zero and find $\sigma_1 =\xi + \mathcal{O}(\varepsilon)$, $\sigma_2 = \mathcal{O}(\varepsilon)$ and for the eigenvectors $\mathbf{v}_1 = \left(1, 0 \right)^T + \mathcal{O}(\varepsilon)$, $\mathbf{v}_2 = \left(-1/\xi, 1 \right)^T + \mathcal{O}(\varepsilon)$. These computations are consistent with the eigenvalues and eigenvectors of the Jacobian of the fast subsystem in the critical limit. {We refer to the two CLVs as the stable CLV $\mathbf{u}_{st}$ and the neutral CLV $\mathbf{u}_n$, and to their corresponding LEs as the stable LE $\bar{\lambda}_{st} < 0$ and the neutral LE $\bar{\lambda}_n = 0$, respectively. This system is not chaotic and therefore does not have a positive LE.}

We identify the three regimes introduced in Section \ref{dynamical_regimes} for this particular system in Figure \ref{icle_eig_vdp}. 
\begin{enumerate}
\item The slow regime represents the dynamics on the slow manifold, when $\xi < 0$. As described in Section \ref{dynamical_regimes}, in this regime, the stable CLV $\mathbf{u}_{st}$ is tangent to the fast eigenvector, therefore $\mathbf{u}_{st} \approx \left(1,0\right)^T \approx \mathbf{v}_1$. Consequently, the stable ICLE is approximately equal to the real part of the first eigenvalue, $\lambda_{st} \approx \mu_1$. The neutral CLV is always tangent to the trajectory and, in this case, is also approximately tangent to the slow eigenvector $\mathbf{u}_n = \left( -1/\xi, 1\right)/\gamma(\xi) \approx \mathbf{v}_2$, where $\gamma(\xi)$ is a normalisation factor.
\item During the critical transition, the CLV dynamics enter the critical regime, where the real part of the fast eigenvalue is positive, $\mu_1 \gg \varepsilon$. The fast eigenvector remains approximately stationary during the critical transition, making it a stable fixed point of the CLV dynamics. The adiabatic condition is satisfied; thus, the neutral and stable CLVs become tangent to the dominant eigenvector $\mathbf{v}_f \approx\mathbf{u}_{st} \approx\mathbf{u}_n$. Thus, we have $\lambda_{st} \approx \lambda_n \approx \mu_1$.

\item {In this case, the transition regime is characterised by the interval $-2\sqrt{\varepsilon} \leq \xi \leq 2\sqrt{\varepsilon}$, where the two eigenvalues become complex conjugates. This transition regime cannot be captured by the critical limit, because the assumption of a clear separation of time scales between the two variables breaks down. The Jacobian spectrum develops complex conjugate eigenvalues and, in the CLV formulation, this spectral change induces rotational dynamics. The onset of this rotational behaviour reflects a coupling between fast and slow directions, which can be interpreted as an early indicator of the impending critical transition.} 
\end{enumerate}

Our predictions are tested with a simulation of system \eqref{slow_vdp}. We use $\varepsilon = 0.001$ and integrate the system using a fourth-order Runge–Kutta time integrator. 
Figure \ref{icle_eig_vdp} shows that, away from the critical transition, the ICLEs correspond to the eigenvalues of the Jacobian, which means that the CLVs are tangent to the eigenvectors.

\begin{figure}
    \centering
    \includegraphics[width=0.95\linewidth]{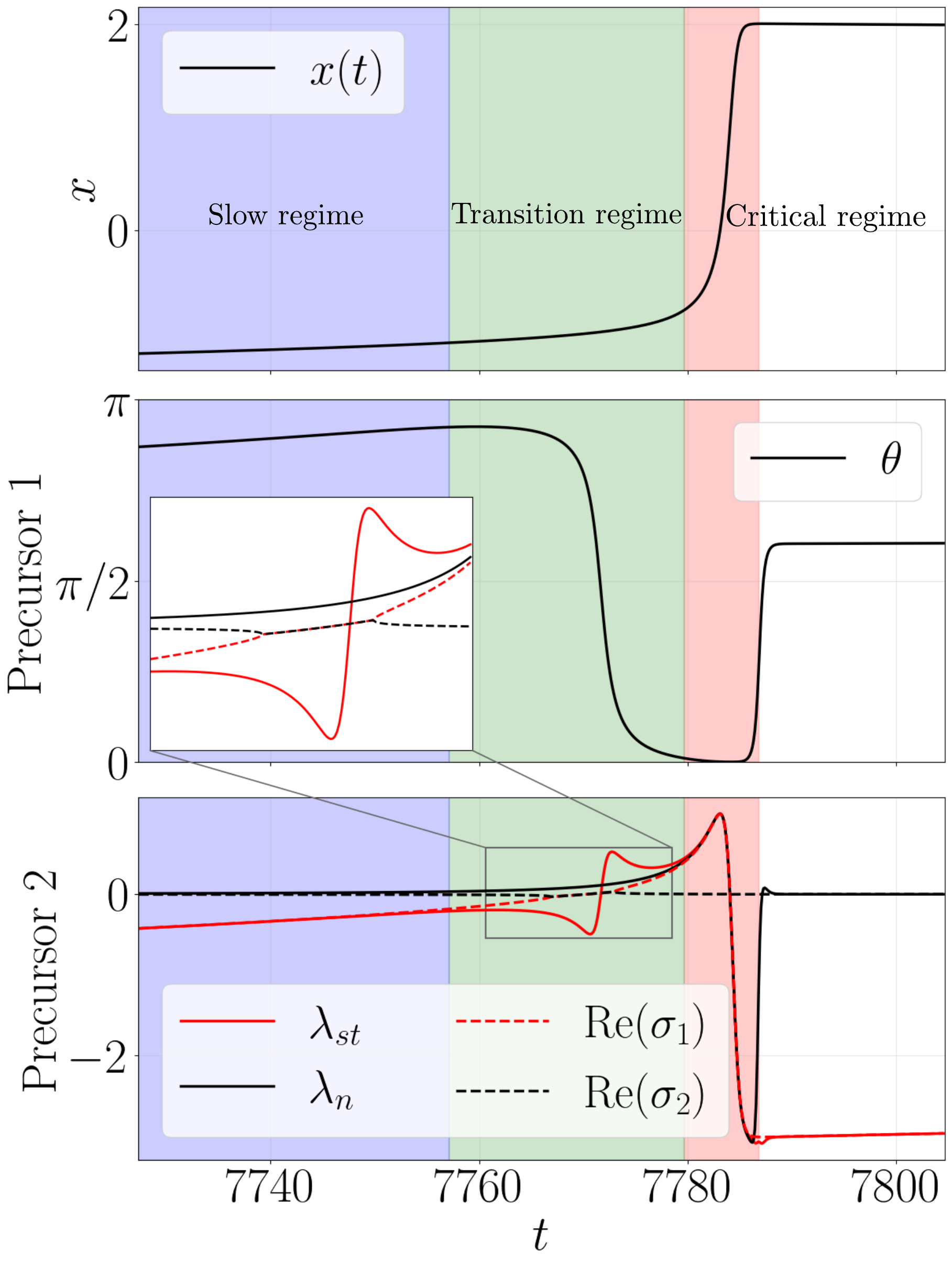}
    \caption{Van der Pol system. Top: evolution of the $x$ variable. Middle: evolution of the angle between the two CLVs $\theta$. Bottom: evolution of the ICLEs $\lambda_{st}$ and $\lambda_n$ and the real part of the eigenvalues. In the slow regime, the fast ICLE and the fast eigenvalue are equal, as shown in the bottom panel. In the transition regime, the fast ICLE and the fast eigenvalue decouple and the angle between CLVs quickly diminishes (both middle and bottom panels). In the critical regime, the adiabatic approximation holds and the CLVs become tangent to the fast eigenvector, as shown in the middle and bottom panels.}
    \label{icle_eig_vdp}
\end{figure}

When $x\approx-1$, the system is near a critical transition and the fast eigenvalue approaches $0$. Here, there is no clear time-scale separation between $x$ and $y$. In this range we have the transition regime, where the fast CLV detaches from the fast eigenvector. The eigenvalues become coupled and form a complex conjugate pair, {which induces a rotation of the fast CLV, as shown in Figure \ref{clv_vdp}.}
After {$x$ {becomes larger than} $-1$}, the system undergoes a critical transition, the fast eigenvalue becomes positive, and since the eigenvector remains stationary, the CLVs become tangent to it; as a consequence, the ICLEs converge to the value of the fast eigenvalue.

\begin{figure} 
    \centering
    \includegraphics[width=0.9\linewidth]{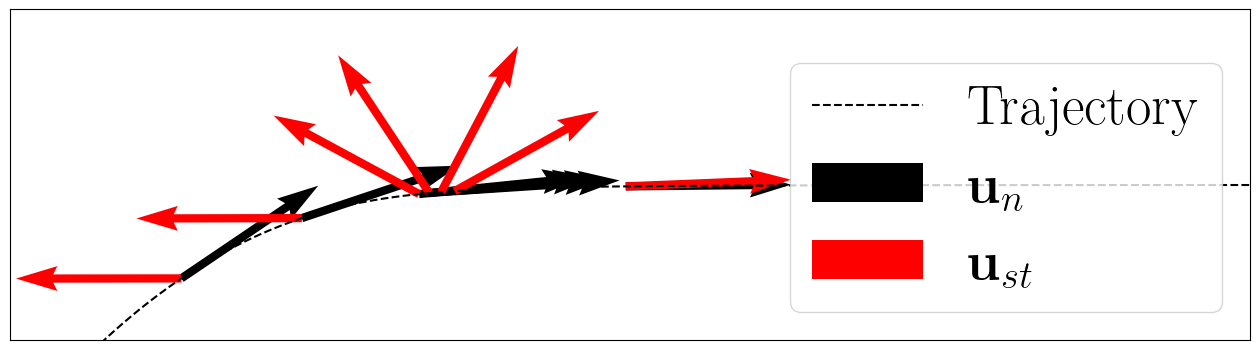}
    \caption{Van der Pol system, CLVs of the fast-slow Van der Pol oscillator at the fold bifurcation. Before the critical transition, the stable CLV rotates, indicating that the system is undergoing the transition regime.}
    \label{clv_vdp}
\end{figure}

\section{Numerical Results} \label{section:numerical results}
In this section, we perform numerical tests that validate the theory introduced in Sections \ref{section:clv_equation}, \ref{section: stability of clvs}, \ref{section:fast-slow systems} and \ref{sec: adiabatic condition and dynamical regimes}.

\subsection{Bistable Rössler}\label{sec:b_rossler}
We consider an explicitly fast-slow chaotic system \cite{rosslerChaoticHierarchy1983}

\begin{equation} \label{skew-rossler}
    \begin{aligned}
        \frac{dx}{dt} &= -y - a z, \\
        \frac{dy}{dt} &= x + b y, \\
        \varepsilon\frac{dz}{dt} &= (1 - z^2)(x - 1 + z) - \varepsilon z.
    \end{aligned}
\end{equation}

The first two equations describe an unstable planar spiral, while the third introduces a nonlinear term that governs a critical transition. The slow dynamics are equivalent to those of the Rössler system, whereas the fast dynamics exhibit a fold-type dynamic bifurcation. Consequently, the overall behaviour shares key qualitative features with the system analysed previously.
The system is integrated using a fourth-order explicit Runge–Kutta scheme with a time step of $\Delta t = 0.01$ and parameters $a = 0.95$, $b = 0.15$, and $\varepsilon = 0.03$. A reference trajectory of the system is shown in Figure~\ref{fig:b-rossler_attractor}.
Similarly to the classical Rössler system, the trajectory of the bistable Rössler oscillator spends most of its evolution on an expanding spiral, which acts as the slow manifold. In this case, however, the fast subsystem induces abrupt transitions between the two states $z = \pm 1$, similarly to the relaxation oscillations of the Van der Pol system. These transitions originate from a fold bifurcation in the fast dynamics. The portion of the attractor at $z = 1$ also acts as a slow attracting manifold on which the dynamics evolve; therefore, the transitions move the state from one attracting branch of the critical manifold to another.

\begin{figure}
    \centering
    \includegraphics[width=0.95\linewidth]{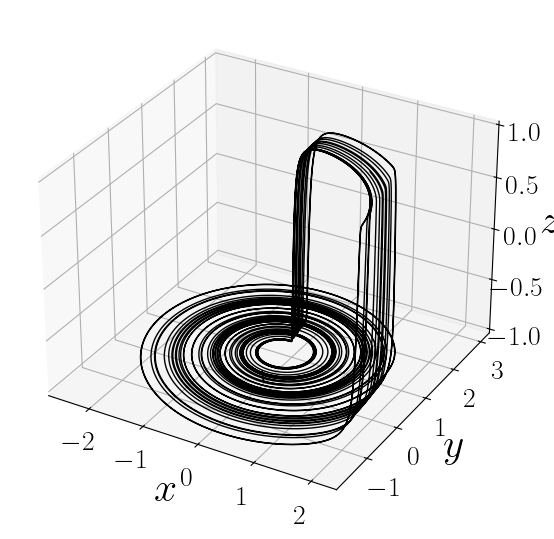}
    \caption{Bistable Rössler system chaotic attractor.}
    \label{fig:b-rossler_attractor}
\end{figure}

When the trajectory is on the slow manifold, the ICLE values correspond to the eigenvalues of the system's Jacobian (Figure \ref{fig:b-rossler_icle}), indicating that the fast CLV is tangent to the fast eigenvectors, exactly as described in Section \ref{dynamical_regimes}. This is the slow regime. {When the eigenvalue approaches zero}, the system undergoes the transition regime, the fast CLV is no longer equal to the fast eigenvector of the Jacobian and, thus, it is repelled towards the subspace spanned by the slow CLVs. This effect can also be appreciated if we analyse the stable ICLE: during the slow regime the ICLE is equal {to the fast eigenvalue, $\lambda_{st} \approx \mathrm{Re}(\sigma_f)$;} when the system undergoes the transition regime, the ICLE quickly moves to zero, implying that the fast CLV is transitioning to a subspace linked to a slower time scale.
After the crossing, the fast eigenvalue becomes positive and the system undergoes the critical regime. The $(1,2)$ fast-slow structure implies that the fast eigenvector will be approximately stationary and will therefore become a stable fixed point of the CLV dynamics. All the CLVs will become tangent to it, and the ICLEs will have the same values as the fast eigenvalue. These effects are shown in Figure \ref{fig:b-rossler_icle}.

\begin{figure}
    \centering
    \includegraphics[width=1\linewidth]{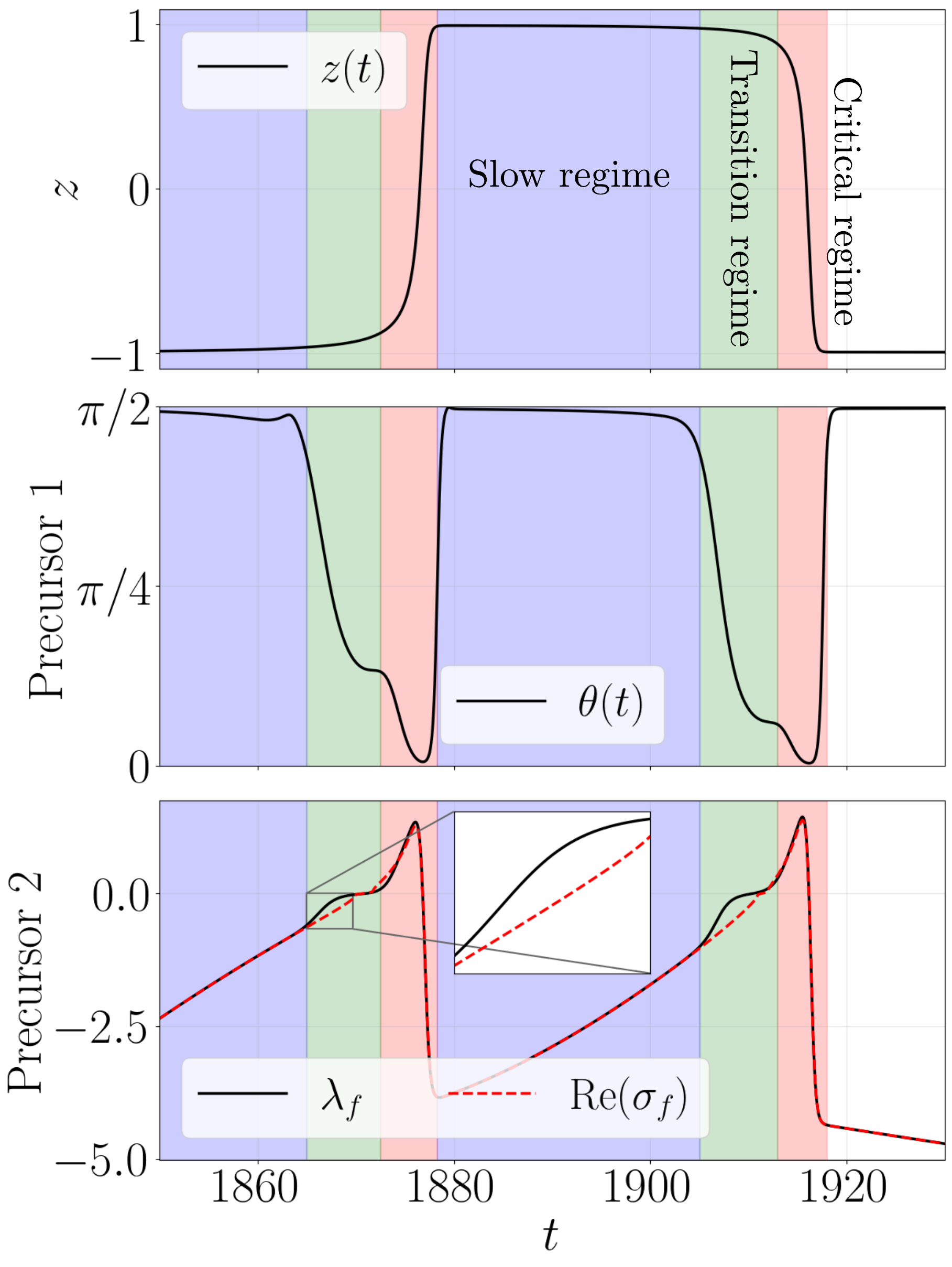}
    \caption{Bistable Rössler system. Top: time evolution of the $z(t)$ variable. Middle: time evolution of the cosine of the angle between the fast CLV and the subspace spanned by the slow CLVs. Bottom: time evolution of the ICLE and the fast eigenvalue. In the slow regime, the fast ICLE and the fast eigenvalue coincide. In the transition region, the ICLE and the eigenvalue decouple and the principal angle between fast and slow CLVs collapses rapidly. During the critical regime, the fast eigenvalue becomes larger than zero and triggers the critical transition, the CLVs become tangent along the eigendirection associated with the unstable fast eigenvalue, and the principal angle becomes zero.}
    \label{fig:b-rossler_icle}
\end{figure}

We analyse the statistics of precursors using a time series of length $100000$ time units with $297$ critical transitions. We use Precision, Recall, and F1-score to evaluate the prediction capabilities of our algorithms
\begin{align}
    P &= \frac{\text{TP}}{\text{TP} + \text{FP}} \quad \text{(Precision)} \\
    R &= \frac{\text{TP}}{\text{TP} + \text{FN}} \quad \text{(Recall)} \\
    F1 &= 2 \frac{P \times R}{P + R} \quad \text{(F1-score)},
\end{align}
{where TP, FP, FN are the numbers of \textit{true positive}, \textit{false positive}, and \textit{false negative} outcomes, respectively.}

{Using Precursor 1 with threshold $\alpha=0.94$ ($\approx20^\circ$ angle), we find $P = 1$, $R = 1$, and $F1 = 1$. The mean forewarning time (MFT) is $5.88$ time units and the standard deviation (SD) is $2.79$ time units. The maximum forewarning time is $12.03$ time units and the minimum forewarning time is $2.00$ time units.}

{Using Precursor 2 with threshold $\alpha = 0.2$, we obtain $P = 1.0$, $R = 1.0$, and $F1 = 1.0$. The MFT is $7.94$ time units and the SD is $3.18$ time units. The maximum and minimum forewarning times are $13.59$ and $3.33$ time units, respectively. Table \ref{table:stats} summarises the results and Figure \ref{fig:ee_stats} shows the time distribution of the forewarning time.}

\subsection{Coupled Fast-Slow FitzHugh--Nagumo Units}\label{sec:fhn}
We consider a ten-dimensional system of coupled FitzHugh--Nagumo units \cite{ansmannExtremeEventsExcitable2013}, described by

\begin{align}
\label{eq:fhn10}
    \begin{aligned}
        \varepsilon\dot{x}_i &= x_i (a - x_i)(x_i - 1) - y_i + k \sum_{j=1}^{n} A_{ij} (x_j - x_i), \\
        \dot{y}_i &= b_i x_i - c y_i,
    \end{aligned}
\end{align}

for $i = 1,...,n$, where \( x_i \) and \( y_i \) are the fast and slow state variables, respectively. The parameter \( \varepsilon > 0 \) is a small time-scale separation parameter, indicating that \( x_i \) evolves on a fast time scale relative to \( y_i \). The cubic term \( x_i(a - x_i)(x_i - 1) \) generates bistable or excitable behaviour depending on the choice of the parameter \( a \). The variable \( y_i \) serves as a slow recovery variable, linearly coupled to \( x_i \) through the parameters \( b_i \) and \( c \).
The coupling term \( k \sum_{j=1}^{n} A_{ij}(x_j - x_i) \) describes interactions between fast variables, where \( k \) is the coupling strength and \( A_{ij} \) is the connectivity matrix encoding the underlying coupling structure. Entries of \( A_{ij} \) indicate that node \( j \) influences node \( i \), while the term \( (x_j - x_i) \) ensures that coupling acts to reduce state differences among connected variables.

In the simulations we use $n = 5$, $a = -0.02651$, $c = 0.02$, $k = 0.00128$, $\varepsilon = 0.01$, and $b_i = 0.006 + 0.008 \frac{i-1}{n-1}$ for $i = 1,...,n$. The connectivity matrix $A$ is full, $A_{ij} = 1, \; \forall i,j$. These equations are integrated numerically using an explicit fourth-order Runge–Kutta integrator with a time step $\Delta t = 1$. A typical time series of the mean of the $n$ fast variables is shown in Figure \ref{fig:FHN_timseries}. This set of equations generates periodic motions with a rich structure. In particular, the mean of the fast variables exhibits different critical transitions that can be used as a benchmark to understand the generic behaviour of CLVs in systems of this kind.

\begin{figure}
    \centering
    \includegraphics[width=0.95\linewidth]{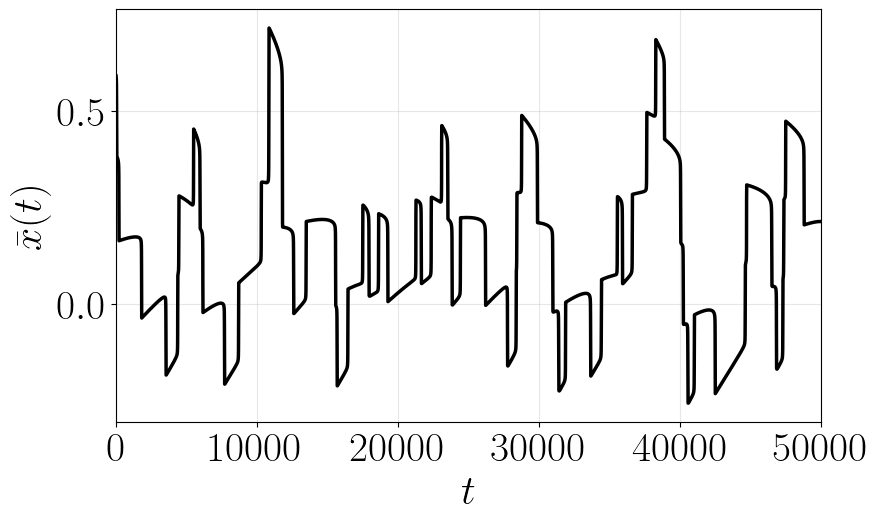}
    \caption{FitzHugh--Nagumo coupled oscillator system, time series of $\bar{x}(t) = \frac{1}{n} \sum_{i=1}^{n}x_i(t)$.}
    \label{fig:FHN_timseries}
\end{figure}

In Figure \ref{fig:FHN_icle}, the time evolution of the angles between fast and slow CLV subspaces is shown, together with the evolution of the sixth ICLE and eigenvalues.
{During the slow regime, the ICLE is equal to the real part of the sixth eigenvalue, confirming that fast CLVs become tangent to fast eigenvectors during the slow regime, as predicted in Section \ref{dynamical_regimes}. The largest eigenvalue is $\mathcal{O}(\varepsilon)$ in the slow regime, due to the fast-slow structure of system \eqref{eq:fhn10}. The angle between fast and slow CLV subspaces decreases slowly, similarly to the Van der Pol system studied in Section \ref{subsection: vanderpol}.}

{In the transition regime, before the critical transition, the sixth ICLE detaches from the sixth eigenvalue, indicating that the fast CLV decouples from the fast eigenvector during the transition regime, as predicted in Section \ref{dynamical_regimes}. In this case, the fast CLVs do not collapse onto the tangent space, but, similarly to the Van der Pol oscillator in Figure \ref{subsection: vanderpol}, the angle increases rapidly and, at the same time, the largest ICLE spikes rapidly. This indicates that some of the fast eigenvalues are coupling with some slow eigenvalues via complex conjugation and the fast CLV subspace is rotating as a consequence of Proposition \ref{proposition 3}. The precursor should be taken as the detachment of the CLV from the eigenvector during the transition regime.}

{In the critical regime, the state detaches from the slow manifold and the extreme event begins. The maximum eigenvalue of the Jacobian becomes positive and, due to a spectral gap, starts attracting the other CLVs. Under the adiabatic hypothesis, the CLVs become tangent to the eigenvector of the Jacobian with the maximum eigenvalue, which is exactly what is shown in Figure \ref{fig:FHN_icle}: the sixth ICLE and the maximum eigenvalue coincide during the critical transition, indicating that the sixth CLV is tangent to the associated eigenvector. This phenomenon generates tangencies, as predicted in Section \ref{dynamical_regimes}.}

{We evaluate the prediction performance of Precursor 2 for the FitzHugh--Nagumo system. We set $\alpha = 0.18$ and find $P = R = F1 = 1$ with $\text{MFT} = 31.89$ time units and $\text{SD}=13.37$. The maximum forewarning time is $47$ time units and the minimum forewarning time is $3$ time units. Table \ref{table:stats} summarises the precursor performance and Figure \ref{fig:ee_stats} shows the time distribution of predictions.}
 {Precursor 1 is not applicable to this case because the angle between fast and slow CLVs does not collapse during the transition regime due to a complex conjugacy between eigenvalues (this case belongs to Proposition \ref{proposition 3}. 

\begin{figure}
    \centering
    \includegraphics[width=0.95\linewidth]{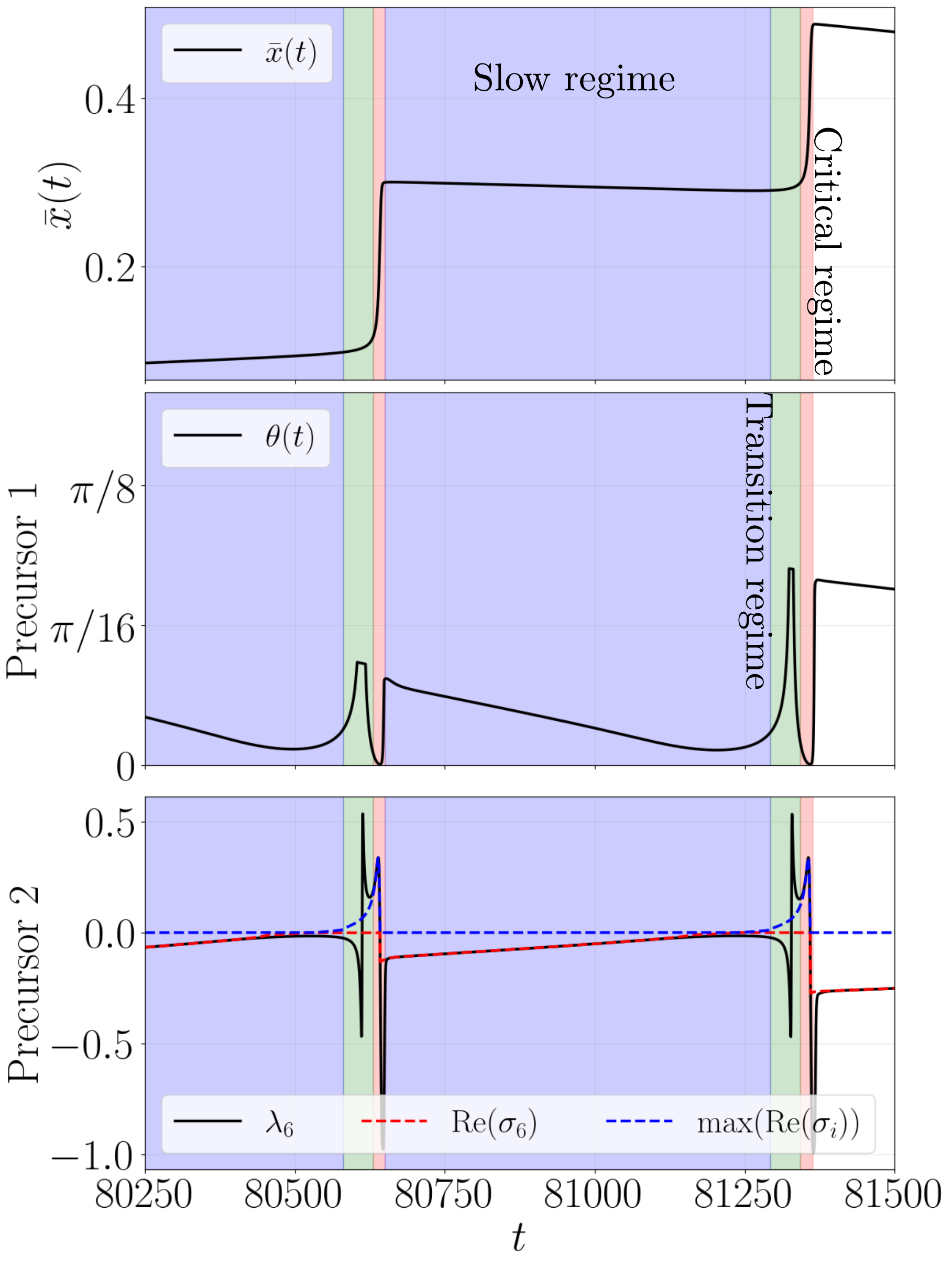}
    \caption{FitzHugh--Nagumo oscillator. First panel from above: average of the $n$ fast states $\bar{x}(t) = \frac{1}{n} \sum_{i=1}^{n}x_i(t)$. Second panel: smallest principal angle between the fast CLV subspace and the slow CLV subspace. Third panel: minimum ICLE and minimum eigenvalue. Fourth panel: maximum ICLE and maximum eigenvalue. For all critical transitions shown in the top panel, all three regimes introduced in Section \ref{dynamical_regimes} are identifiable.}
    \label{fig:FHN_icle}
\end{figure}

\subsection{Multiscale Lorenz 96 with Critical Transitions}\label{sec:l96}
We consider a modified multiscale Lorenz 96 \cite{lorenzPredictabilityProblemPartly1995, carluLyapunovAnalysisMultiscale2019} system in which we add an equation to allow the system to transition from a ``low temperature'' state to a ``high temperature'' state.

\begin{equation} \label{eq: multiscale_lorenz96}
    \begin{aligned}
        \frac{dX_k}{dt}
         &= X_{k-1}\bigl(X_{k+1}-X_{k-2}\bigr) - X_k \\
         &- \sum_{j=1}^{J} Y_{j,k} + F_X + \alpha \, Z, \\
         \varepsilon \frac{dY_{j,k}}{dt}
         &= b\, Y_{j+1,k}\bigl(Y_{j-1,k}-Y_{j+2,k}\bigr)
         - Y_{j,k} \\
         &+ X_k + \frac{F_Y}{b} + \frac{\beta}{b} \, Z,\\
        \eta \frac{dZ}{dt}
        &= \left(1 - Z^2 \right) \left( Z + \gamma E(X,Y) -1 \right) - \eta \, Z.
    \end{aligned}
\end{equation}

Cyclic boundary conditions are imposed:

\begin{equation}
X_{k+K} = X_k, \quad
Y_{j+J,k} = Y_{j,k}, \quad
Y_{j,k+K} = Y_{j,k}.
\end{equation}

Here, \(X_k\) (\(k = 1, \dots, K\)) are the slow variables, each coupled to \(J\) fast variables \(Y_{j,k}\) (\(j = 1, \dots, J\)). The scalar \(Z\) is a global-scale variable, which may be interpreted as a large-scale climatic or planetary-scale feedback process that interacts with the lower-scale dynamics.
The constants \(F_X\) and \(F_Y\) are external forcings acting on the slow and fast variables, respectively. The parameters \(\alpha\) and \(\beta\) control the coupling strength between the global variable \(Z\) and the slow and fast components. The coefficient \(b\) scales the amplitude and nonlinearity of the fast subsystem, while \(\varepsilon \ll 1\) and \(\eta \ll 1\) set the time-scale separation of the fast (\(Y\)) and global (\(Z\)) variables relative to the slow (\(X\)) dynamics.
The nonlinear feedback in \eqref{eq: multiscale_lorenz96} depends on the energy \cite{carluLyapunovAnalysisMultiscale2019}

\begin{equation}
E(X,Y) = \sum_{k=1}^{K} X_k^2
       + \sum_{k=1}^{K}\sum_{j=1}^{J} Y_{j,k}^2.
\end{equation}

When the system reaches a certain cutoff value of $E(X,Y)$, it transitions to a higher energy state. The parameter \(\gamma\) modulates how strongly the global mode responds to the combined energy of the lower-scale processes.
The test case has $K = 36$, $J = 10$, $b = 10$, $\alpha = 4$, $\beta = 2$, $\gamma = 1/445$, $F_X = 14$, $F_Y = 8$, $\varepsilon = 0.1$, and $\eta = 0.01$. The system is integrated with an explicit fourth-order Runge–Kutta integrator with a time step $\Delta t = 0.001$.

\begin{figure}
    \centering
    \includegraphics[width=1\linewidth]{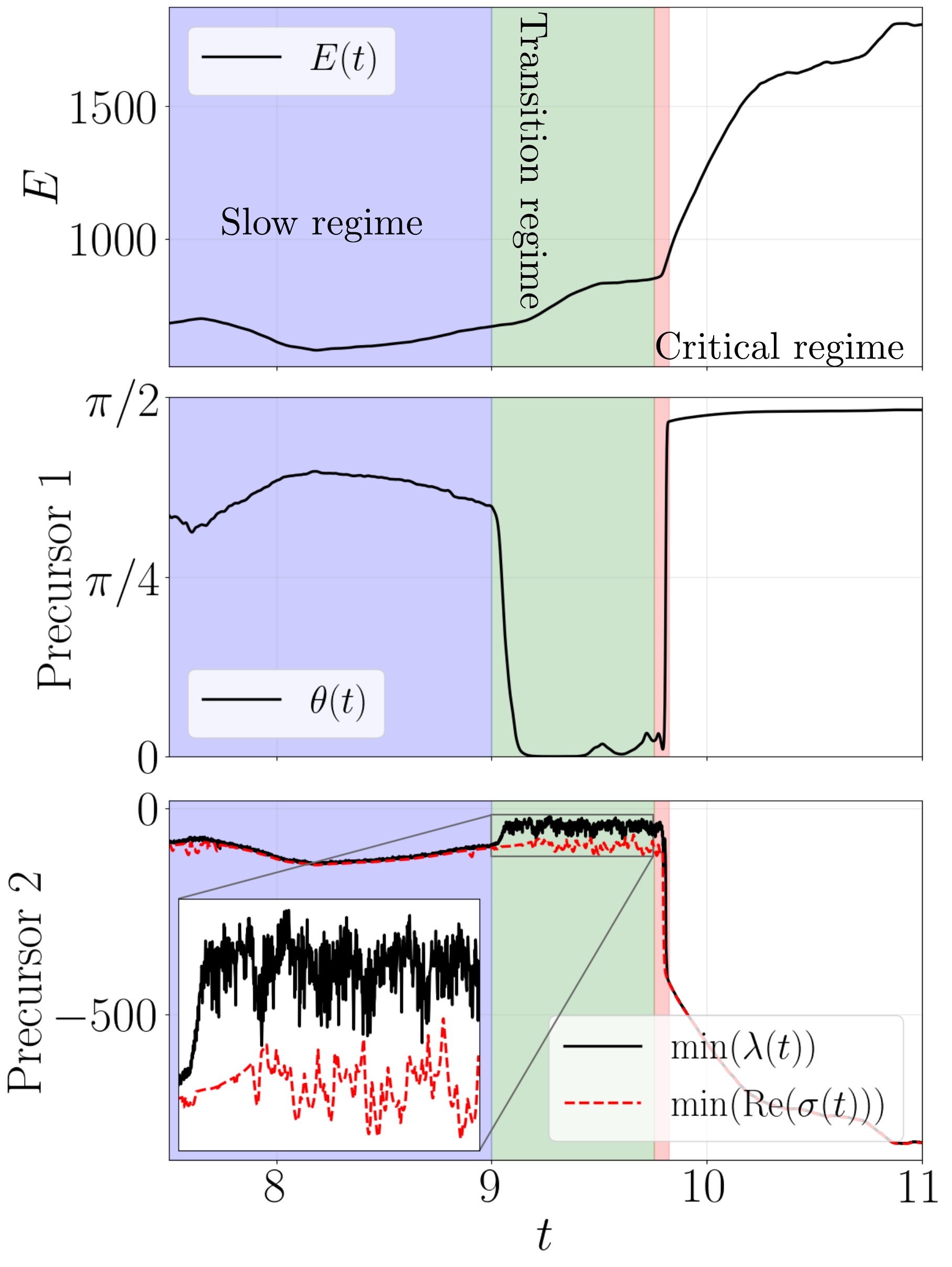}
    \caption{Multiscale Lorenz 96 system. Top: time series of the energy of the system. Middle: angle between fast and slow CLV subspaces. Bottom: time series of the fastest ICLE and minimum eigenvalue. At time $t \approx 9.5$, system \eqref{eq: multiscale_lorenz96} undergoes a large critical transition from its low-energy state to its high-energy state. Before the transition, the principal angle collapses to zero and the minimum eigenvalue and minimum ICLE decouple, indicating that the system is undergoing the transition regime.}
    \label{fig:L96_icle}
\end{figure}

As described in Section \ref{dynamical_regimes}, during the slow regime the ICLE connected to the fastest time scale is equal to the smallest eigenvalue of the Jacobian of the dynamical system (Figure \ref{fig:L96_icle}). When this eigenvalue becomes large enough, the instantaneous time scale associated with the fast global mode becomes comparable to the characteristic time scale of the $Y$ variables. Hence, the smallest ICLE and the smallest eigenvalue decouple, indicating that the fastest CLV is no longer equal to the fastest eigenvector. The fastest CLV is then repelled toward the slower CLV subspace, thereby generating a tangency.

{In this case, the critical transition occurs at $t \approx 9.75$ as shown in Figure \ref{fig:L96_icle}; both CLV tangencies and ICLE–eigenvalue decouplings act as precursors of the extreme events during the transition regime.}
{To test the performance of Precursor 1 and Precursor 2, we run an ensemble of $100$ trajectories of Equation \eqref{eq: multiscale_lorenz96}, with a total of $90$ critical transitions, with $K = 20$, $J = 3$, $b = 10$, $\alpha = 4$, $\beta = 2$, $\gamma = 1/340$, $F_X = 14$, $F_Y = 8$, $\varepsilon = 0.5$, and $\eta = 0.005$, using an explicit fourth-order Runge–Kutta integrator with time step $\Delta t = 0.001$.}
{For Precursor 1, we set $\alpha = 0.985$ and achieve $P = 0.989$, $R = 1$, $F1 = 0.994$ with $\text{MFT} = 0.278$ time units and $\text{SD} = 0.146$ time units. The minimum forewarning time is $0.064$ time units and the maximum forewarning time is $0.858$ time units.
}
{For Precursor 2, we set $\alpha = 67$ and find $P = 1$, $R = 1$, $F1 = 1$ with $\text{MFT} = 0.123$ time units and $\text{SD} = 0.111$ time units. The minimum forewarning time is $0.043$ time units and the maximum forewarning time is $0.870$ time units.
Table \ref{table:stats} summarises the precursors' performance. Figure \ref{fig:ee_stats} shows the prediction time distributions.}

\begin{table*}
\caption{Comparison of algorithms across dynamical systems.}
\centering
\label{table:stats}
\begin{tabular}{|c|c|c|c|}
\hline
 & Bistable Rössler & FitzHugh--Nagumo & Multiscale Lorenz 96 \\
\hline
Precursor 1 & $\text{F-score = 1}$ &{NA} &  $\text{F-score = 0.99}$\\
\hline
Precursor 2 & $\text{F-score = 1}$  & $\text{F-score = 1}$ & $\text{F-score = 1}$ \\
\hline
\end{tabular}
\end{table*}

\begin{figure}
    \centering
    \includegraphics[width=1\linewidth]{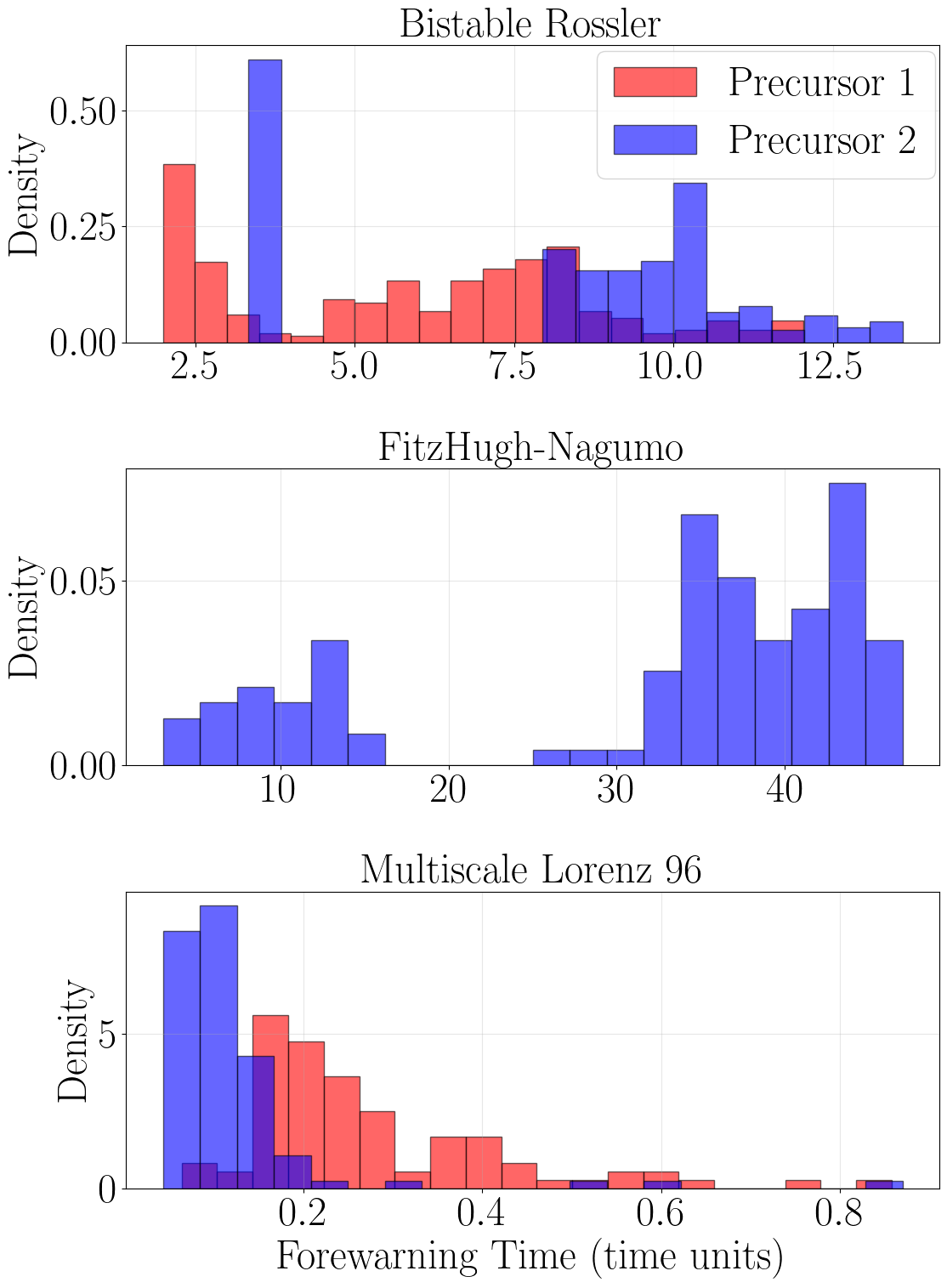}
    \caption{Forewarning time of the precursors of extreme events. Upper panel: Bistable Rössler. Centre panel: FitzHugh--Nagumo. Lower panel: Multiscale Lorenz 96. }
    \label{fig:ee_stats}
\end{figure}
 
\section{Conclusions}\label{section:conclusion}
We developed a theory based on fast-slow dynamical systems to explain and predict the occurrence of extreme events, of which critical transitions form a subset. Building upon the evolution equation for covariant Lyapunov vectors (CLVs) as a projected flow on the unit sphere, we derived a theoretical framework in which invariant sets such as fixed points govern CLV behaviour, and CLV tangency along a dominant eigendirection is interpreted as relaxation toward an attracting fixed point. In fast-slow systems, we identified a cascade of dynamical regimes preceding extreme events and critical transitions: (i) a {\it slow regime}, in which fast CLVs coincide with the fast eigenvectors and remain transversal to the slow subspace, which coincides with the tangent space of the slow manifold; (ii) a {\it transition regime}, in which fast eigenvalues approach zero while the fast CLVs detach from the fast eigenvectors through either rotation or attraction toward the slow tangent space; and (iii) a {\it critical regime}, in which a dominant positive eigenvalue attracts multiple CLVs, producing tangencies and breaking the transversality between fast and slow subspaces. Building on this theory, we proposed two precursors: (i) monitoring the principal angles between fast and slow CLV subspaces, and (ii) tracking the decoupling between fast instantaneous covariant Lyapunov exponents and Jacobian fast eigenvalues. We numerically validated the theory and precursors on systems ranging from low to higher dimensions, including a Van der Pol oscillator, a bistable Rössler system, coupled FitzHugh-Nagumo units, and a modified multiscale Lorenz-96 system, where the proposed precursors consistently forewarn the occurrence of extreme events and critical transitions with 100\% precision and recall. Future work will focus on extending the framework to higher-dimensional multiscale systems. This work opens opportunities for predicting extreme events with theoretically justified precursors in the tangent space.

\begin{acknowledgments}
The authors acknowledge funding from the Starting Grant of Fondazione Compagnia di San Paolo (CSP).
L.M. also acknowledges funding from the ERC Starting Grant PhyCo n.~949388.
\end{acknowledgments}

% The \nocite command causes all entries in a bibliography to be printed out
% whether or not they are actually referenced in the text. This is appropriate
% for the sample file to show the different styles of references, but authors
% most likely will not want to use it.
%\nocite{*}
\bibliographystyle{apsrev4-2}
\bibliography{refs}% Produces the bibliography via BibTeX.

\end{document}